\def\modif{\color{black}}
\def\endmodif{\color{black}}
\newtheorem{lemma}{Lemma}
\newtheorem{theorem}{Theorem}
\newtheorem{corollary}{Corollary}
\newtheorem{definition}{Definition}
\newtheorem*{definition*}{Definition}
\newtheorem{remark}{Remark}
\DeclareMathOperator{\conv}{conv}
\DeclareMathOperator{\cone}{cone} 
\DeclareMathOperator{\rank}{rank}
\DeclareMathOperator{\cond}{cond} 
\DeclareMathOperator{\logdet}{logdet} 
\DeclareMathOperator{\tr}{trace} 
\DeclareMathOperator{\diag}{Diag} 
\renewcommand{\maketag@@@}[1]{\hbox{\m@th\normalsize\normalfont#1}}%
\begin{document}

\title{Blind Audio Source Separation with \\ 
Minimum-Volume Beta-Divergence NMF}

\author{~Valentin Leplat, Nicolas Gillis, Andersen M.S. Ang*\thanks{* Department of Mathematics and Operational Research,
Facult\'e Polytechnique, Universit\'e de Mons,
Rue de Houdain 9, 7000 Mons, Belgium. Authors acknowledge the support by the Fonds de la Recherche Scientifique - FNRS and the Fonds Wetenschappelijk Onderzoek - Vlanderen (FWO) under EOS Project no O005318F-RG47, and by the European Research Council (ERC starting grant no 679515). 
E-mails: \{valentin.leplat, nicolas.gillis, manshun.ang\}@umons.ac.be. 
%This research was supported by the ERC Grant xxx and FNRS Grant xxx. 

Manuscript received in July 2019. Accepted April 2020.} % <-this % stops a space
}
% The paper headers
\markboth{IEEE Transactions on Signal Processing ,~Issue XX, Month~2020}%
{Shell \MakeLowercase{LEPLAT, GILLIS, ANG: Blind Audio Source Separation with Min-Vol NMF}}

% make the title area
\maketitle
\begin{abstract} 
Considering a mixed signal composed of various audio sources and recorded with a single microphone,  we consider on this paper the blind audio source separation problem which consists in isolating and extracting each of the sources. To perform this task, nonnegative matrix factorization (NMF) based on the Kullback-Leibler and Itakura-Saito $\beta$-divergences is a standard and state-of-the-art technique that uses the time-frequency representation of the signal. We present a new NMF model better suited for this task. It is based on the minimization of $\beta$-divergences along with a penalty term that promotes the columns of the dictionary matrix to have a small volume. Under some mild assumptions and in noiseless conditions, we prove that this model is provably able to identify the sources. In order to solve this problem, we propose multiplicative updates whose derivations are based on the standard majorization-minimization framework. We show on several numerical experiments that our new model is able to obtain more interpretable results than standard NMF models. Moreover, we show that it is able to recover the sources even when the number of sources present into the mixed signal is overestimated. In fact, our model automatically sets sources to zero in this situation, hence performs model order selection automatically. 
\end{abstract}
% Note that keywords are not normally used for peerreview papers.
\begin{IEEEkeywords}
nonnegative matrix factorization, 
$\beta$-divergences, 
minimum-volume regularization, 
identifiability, 
blind audio source separation, 
model order selection
\end{IEEEkeywords}

% blind audio source separation, nonnegative matrix factorization, minimum-volume regularization, identifiability, model order selection 

\IEEEpeerreviewmaketitle

%%%%%%%%%%%%%%%%%%%%%%%%%%%%%%%%%

\section{Introduction}

Blind audio source separation concerns the techniques used to extract unknown signals called sources from a mixed audio signal $x$. 
In this paper, we assume that the audio signal is recorded with a single microphone. Considering a mixed signal composed of various audio sources, the blind audio source separation consists in isolating and extracting each of the sources on the basis of the single recording. Usually, the only known information is the number of estimated sources present in the mixed signal. The blind source separation problem is said to be underdetermined as there are fewer sensors (only one in our case) than sources. It then appears necessary to find additional information to make the problem well posed. The most common technique used for this kind of problem is to get some form of redundancy in the mixed signal in order to make it overdetermined. This is typically done by computing the spectrogram which represents the signal in the time and frequency domains simultaneously (splitting the signals into overlapping time frames). The computation of spectrograms can be summarized as follows: short time segments are extracted from the signal and multiplied element wise by a window function or ``smoothing'' window of size $F$. Successive windows overlap by a fraction of their length, which is usually taken as 50\%. On each of these segments, a discrete Fourier transform is computed and stacked column-by-column in a matrix $X$. Thus, from a one-dimensional signal $x \in \mathbb{R}^{T}$, we obtain a complex matrix $X \in \mathbb{C}^{F \times N}$ called spectrogram where $F \times N \simeq 2T$ (due to the 50\% overlap between windows). Note that the length of the window determines the shape of the spectrogram. 
These preliminary operations correspond to computing the short time Fourier transform (STFT), which is given by the following formula: for $1 \leq f \leq F$ and $1 \leq n \leq N$, $X_{f,n}= \sum_{j=0}^{F-1} w_j x_{nL+j} e^{(-i \frac{2\pi f j }{F})}$,
 where $w \in \mathbb{R}^{F}$ is the smoothing window of size $F$, $L$ is a shift parameter (also called hop size), and $H=F-L$ is the overlap parameter. 
The number of rows corresponds to the frequency resolution. Letting $f_s$ be the sampling rate of the audio signal, consecutive rows correspond to frequency bands that are $f_s/F$ Hz apart. 
%The number of columns is determined by the time resolution. \\

The time-frequency representation of a signal highlights two of its fundamental properties: 
sparsity 
and 
redundancy. 
Sparsity comes from the fact that most real signals are not active at all frequencies at all time points. 
Redundancy comes from the fact that frequency patterns of the sources repeat over time. 
Mathematically, this means that the spectrogram is a low-rank matrix. 
These two fundamental properties led sound source separation techniques to integrate algorithms such as nonnegative matrix factorization (NMF). Such techniques retrieve sensible solutions even for single-channel signals.  

%Section~\ref{mixAss} presents the mixing assumption for the acquisition process of the audio signals, and 
%and Section~\ref{NMF} introduces the use of  NMF in the context of audio source separation. 
%Section~\ref{sec:contr} outlines the paper and highlights the main contributions. 

%The factorization is finally used to recover the source estimates spectrograms on which the so-called inverse short time Fourier transform (iSTFT) is applied to represent the source estimates in the time domain.  

\subsection{Mixing assumptions}\label{mixAss}

Given $K$ source signals $s^{(k)} \in \mathbb{R}^{T}$ for $1 \leq k \leq K$, we assume the acquisition process is well modelled by a linear instantaneous mixing model: 
\begin{equation} \label{eq:1}
 x(t)=\sum_{k=1}^{K}  \modif s^{(k)}(t)\endmodif  \quad \mathrm{with} \thinspace t=0,...,T-1\,.
\end{equation}
Therefore, for each time index $t$, the mixed signal $x(t)$ from a single microphone is the sum of the $K$ source signals. It is standard to assume that microphones are linear as long as the recorded signals are not too loud. If signals are too loud, they are usually clipped. The mixing process is modelled as instantaneous as opposed to convolutive used to take into account sound effects such as reverberation. The source separation problem consist in finding source estimates $\hat s^{(k)}$ of $s^{(k)}$ sources for all $k \in \{1,\dots,K\}$. 
Let us denote $S$ the linear STFT operator, and let $S^{\dagger}$ be its conjugate transpose. 
We have  $S^{\dagger}S=F I$, where $I$ is the identity matrix of appropriate dimension.  
For the remainder of this paper, $S^{\dagger}$  stands for the inverse short time Fourier transform. Note that the term inverse is not meant in a mathematical sense. Indeed the STFT is not a surjective transformation from $\mathbb{R}^{T}$ to $\mathbb{C}^{F \times N}$. In other words, each spectrogram or each matrix with complex entries is not necessarily the STFT of a real signal; see~\cite{Lefevre_phd} and~\cite{Magron_phd} for more details.  
By applying the STFT operator $S$ to~\eqref{eq:1}, 
we obtain the mixing model in the time-frequency domain :
\begin{equation*} 
 X 
 = S(x(t)) 
 = S \left( \sum_{k=1}^{K} \modif s^{(k)}(t\endmodif) \right) 
 %= \sum_{k=1}^{K} S \left( s(t)^{(k)} \right)  
 = \sum_{k=1}^{K} S^{(k)}, 
\end{equation*} 
where $S^{(k)}$ is the STFT of the source $k$, that is, the spectrogram of source $k$. 
%Equivalently for each time frame $n$ and $f$-th Fourier coefficient, we have:
%\begin{equation*} 
% \forall (f,n): X_{f,n}= \sum_{k=1}^{K} S_{f,n}^{(k)} \in \mathbb{C}\,.
%\end{equation*} 
To identify the sources, we use in this paper the amplitude spectrogram $V = |X|  \in \mathbb{R}_{+}^{F \times N}$ defined as $V_{fn}=\left| X_{fn} \right| $ for all $f$, $n$. We assume that  
$V = \sum_{k=1}^K \left| S^{(k)} \right|$, 
which means that there is no sound cancellation between the sources, which is usually the case in most signals. 
Finally, we assume that the source spectrograms  $\left| S^{(k)} \right|$ are well approximated by nonnegative rank-one matrices. This leads to the NMF model described in the next section. Note that a source can be made of several rank-one factors in which case a post-processing step will have to recombine them a posteriori (e.g., looking at the correspondence in the activation of the sources over time). 
Note also that we focus on the NMF stage of the source separation which factorizes $V$ into the source spectrograms. 
For the phases reconstruction, which is a highly non-trivial problem, 
we consider a naive reconstruction procedure consisting in keeping the same phase as the input mixture for each source~\cite{Lefevre_phd}.

\subsection{NMF for audio source separation} \label{NMF}

Given a non-negative matrix $V \in \mathbb{R}_{+}^{F \times N}$ (the spectrogram) and a positive integer $K \ll \min(F,N)$ (the number of sources, called the factorization rank), 
NMF aims to compute two non-negative matrices $W$ with $K$ columns and $H$ with $K$ rows such that $V\approx WH$. 
NMF approximate each column of $V$ by a linear combination of the columns of $W$ weighted by the components of the corresponding column of $H$~\cite{algoNMFlee}. 
When the matrix $V$ corresponds to the amplitude spectrogram or the power spectrogram of an audio signal, we have that 

\noindent $\bullet$ $W$ is referred as the dictionary matrix and each column corresponds to the spectral content of a source,  and 

\noindent $\bullet$
  $H$ is the activation matrix specifying if a source is active at a certain time frame and in which intensity. 

In other words, each rank-one factor $W(:,k)H(k,:)$ will correspond to a source: the $k$th column $W(:,k)$ of $W$
 is the spectral content of source $k$, 
 and the $k$th row $H(k,:)$ \modif of \endmodif $H$  is its activation over time. 
  To compute $W$ and $H$, NMF requires to solve the following optimization problem 
\[ 
 \min_{W \geq 0,H \geq 0} 
D\left(V|WH\right) = \sum_{f,n} d(V_{fn}|[WH]_{fn}), 
\] 
where $A \geq 0$ means that $A$ is component-wise nonnegative, and 
$d(x|y)$ is an appropriate measure of fit. 
In audio source separation, a common measure of fit is the discrete $\beta$-divergence denoted $d_{\beta}(x|y)$ and equal to 
\[
%d_{\beta}(x|y) = 
\left\{
  \begin{array}{lr}
    \frac{1}{\beta \left(\beta-1\right)}   \left(x^{\beta}+\left(\beta-1\right)y^{\beta}-\beta xy^{\beta-1}\right)  \text{ for } 
     \beta \neq 0,1, \\
    x\log\frac{x}{y}-x+y           
         \text{ for } \beta=1,  \\
    \frac{x}{y}-\log\frac{x}{y}-1  
        \text{ for } \beta=0. 
  \end{array}
\right.
\] 
For $\beta=2$, this the standard squared Euclidean distance, that is, the squared Frobenius norm $||V-WH||_F^2$. 
For $\beta=1$ and $\beta=0$, 
the $\beta$-divergence corresponds to the Kullback-Leibler (KL) divergence and the Itakura-Saito (IS) divergence, respectively. 
The error measure which should be chosen accordingly with the noise statistic assumed on the data. 
The Frobenius norm assumes i.i.d.\@ Gaussian noise, KL divergence assumes additive Poisson noise, 
and the IS divergence assumes multiplicative Gamma noise~\cite{fevotte2009nonnegative}. 
The $\beta$-divergence $d_{\beta}(x|y)$ is homogeneous of degree $\beta$: $d_{\beta}(\lambda x|\lambda y)=\lambda^{\beta} d_{\beta}(x|y)$. It implies that factorizations obtained with $\beta > 0$ (such as the Euclidean distance or the KL divergence) will rely more heavily on the largest data values and less precision is to be expected in the estimation of the low-power components. The IS divergence ($\beta =0$) is scale-invariant that is $d_{IS}(\lambda x|\lambda y)=d_{IS}(x|y)$ \cite{Fevotte_betadiv}. The IS divergence is the only one in the $\beta$-divergences family to possess this property. It implies that time-frequency areas of low power are as important in the divergence computation as the areas of high power. 
  This property is interesting in audio source separation as low-power frequency bands can perceptually contribute as much as high-power frequency bands. Note that both KL and IS divergences are more adapted to audio source separation than Euclidean distance as it is built on logarithmic scale as human perception; see \cite{Lefevre_phd} and \cite{Fevotte_betadiv}.
  Moreover, the $\beta$-divergence is only convex with respect to $W$ (or $H$) if $\beta \geq 1$. 
Otherwise, the objective function is non-convex. This implies that, for $\beta <1$, even the problem of inferring $H$ with $W$ fixed is non-convex. 
For more details on $\beta$-divergences; 
see~\cite{Fevotte_betadiv}.

\subsection{Contribution and outline of the paper} \label{sec:contr}

In Section~\ref{sec:model}, we propose a new NMF model, referred to as minimum-volume $\beta$-NMF (min-vol $\beta$-NMF), to tackle the audio source separation problem. This model penalizes the columns of the dictionary matrix $W$ so that their convex hull has a small volume. 
To the best of our knowledge, this model is novel in two aspects: 
(1)~it is the first time a minimum-volume penalty is associated with a $\beta$-divergence for $\beta \neq 2$ and it is the first time such models are used in the context of audio source separation, 
and 
(2)~as opposed to most previously proposed minimum-volume NMF models, our model imposes a normalization constraints on the factor $W$ instead of $H$. As far as we know, the only other paper that used a normalization of $W$ is~\cite{zhou2011minimum} but the authors did not justify this choice compared to the normalization of $H$ (the choice seems arbitrary, motivated by the `elimination of the norm indeterminacy'), nor provided theoretical guarantees. 
In this paper, we explain why normalization of $W$ is a better choice in practice, and we prove that, under some mild assumptions and in the noiseless case, this model provably identify the sources; see Theorem~\ref{mainth}. To the best of our knowledge, this is the first result of this type in the audio source separation literature.   
In Section~\ref{sec:algo}, we propose an algorithm to tackle min-vol $\beta$-NMF, focusing on the KL and IS divergences. 
The algorithm is based on multiplicative updates (MU) that are derived using the standard majorization-minimization framework, and that monotonically decrease the objective function. 
In Section~\ref{sec:numexp}, we present several numerical experiments, comparing min-vol $\beta$-NMF with standard NMF and sparse NMF. 
The two mains conclusions are that 
(1)~minimum-volume $\beta$-NMF performs consistently better to  identify the sources, 
and 
(2)~as opposed to NMF and sparse NMF, min-vol $\beta$-NMF is able to detect when the factorization rank is overestimated by automatically setting sources to zero.

%\begin{remark}
%This work is an extended version of the conference paper~\cite{leplat_gretsi}. 
%\end{remark}

\section{Minimum-volume NMF with $\beta$-divergences} \label{sec:model}

In this section, we present a new model of separation based on the minimization of $\beta$-divergences including a penalty term promoting solutions with minimum volume spanned by the columns of the dictionary matrix $W$.  
Section~\ref{sec:model:geo} recalls the geometric interpretation of NMF which motivated the use of a minimum volume penalty on the dictionary $W$.  
Section~\ref{sec:model:norma} discusses the new proposed normalization compared to previous minimum volume NMF models, and proves that min-vol $\beta$-NMF provably recovers the true factors $(W,H)$ under mild conditions and in the noiseless case; see Theorem~\ref{mainth}.

\subsection{Geometry and the min-vol $\beta$-NMF model} \label{sec:model:geo}

As mentioned earlier, $V = WH$ means that each column of $V$ is a linear combination of the columns of $W$ weighted by the components of the corresponding column of $H$; in fact, 
$v_n = Wh_n$ for $n=1,...,N$, 
where $v_{n}$ denotes the $n$th column of data matrix $V$.  This gives to NMF a nice geometric interpretation: for all $n$ 
 \[
 v_{n} \in \cone ( W ) = \left\lbrace v \in \mathbb{R}^{F} | v=W \theta,  \theta \geq 0 \right\rbrace , 
\] 
meaning that the columns of $V$ are contained in the convex cone generated by the columns of $W$; see Figure~\ref{fig:cone} for an illustration.  
%
%This geometry holds due to the non-negativity of $H$.
% NMF geometry is illustrated on Figure~\ref{fig:conichull}. 
\begin{figure}[h]
\centering
\includegraphics[width=0.8\linewidth]{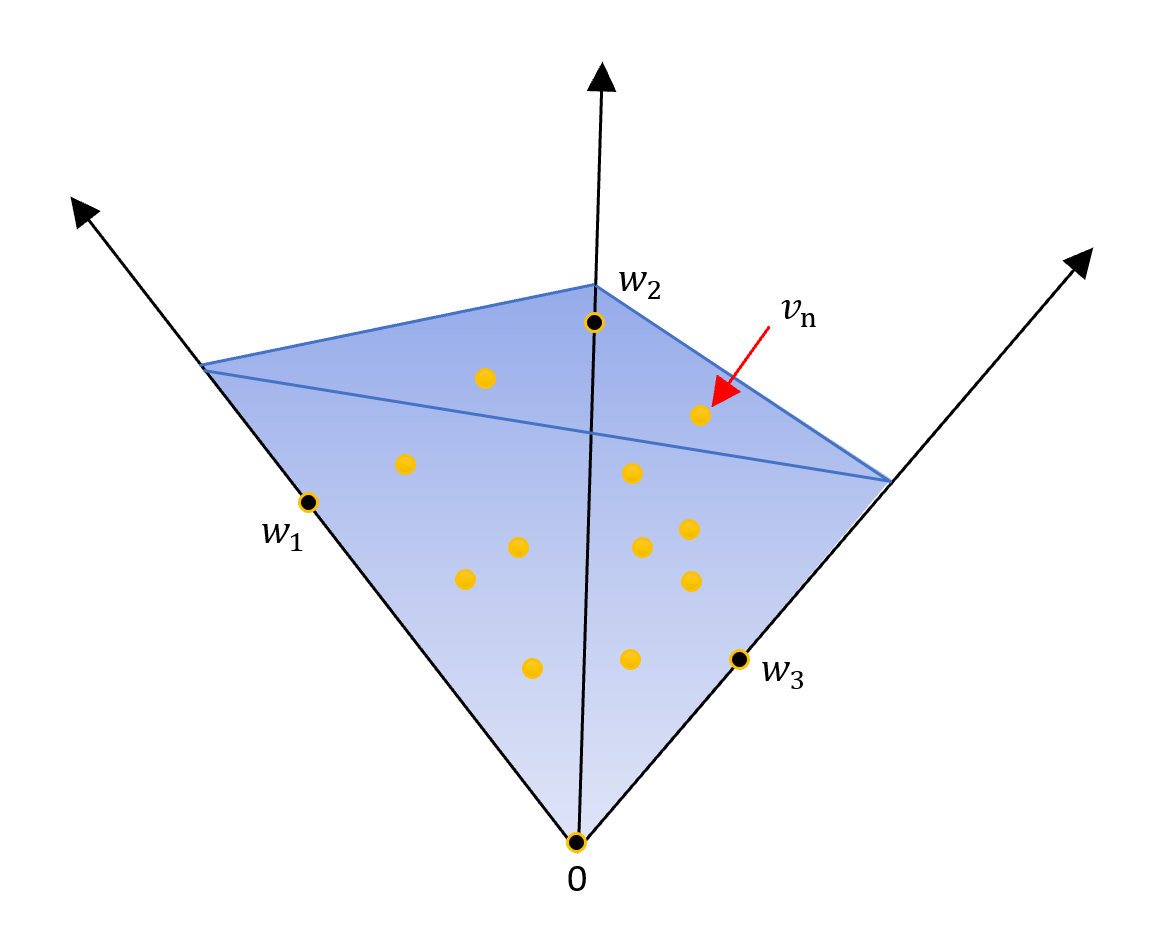}
\caption{Geometric interpretation of NMF for $K=3$~\cite{nmfidentifiable}. \label{fig:cone}} 
\label{fig:conichull}
\end{figure}
From this interpretation, it follows that, in general, NMF decompositions are not unique because there exists several (often, infinitely many) sets of columns of $W$ that span the convex cone generated by the data points; see for example~\cite{huang} for more details. 
Hence, NMF is in most cases ill-posed because the optimal solution is not unique. In order to make the solution unique (up to permutation and scaling of the columns of $W$ and the rows of $H$) hence making the problem well-posed and the parameters $(W,H)$ of the problem identifiable, a key idea is to look for a solution $W$ with minimum volume. Intuitively, we will look for the cone $\cone(W)$ containing the data points and as close as possible to these data points. 
The use of minimum-volume NMF has lead to a new class of NMF methods that outperforms existing ones in many applications such as document analysis and blind hyperspectral unmixing; see the recent survey~\cite{fu2019nonnegative}. 
Note that minimum-volume NMF implicitly enhances the factor $H$ to be sparse: the fact that $W$ has a small volume implies that many data points will be located on the facets of the $\cone(W)$ hence $H$ will be sparse.

Hence, in this paper, we consider the following model, referred to as min-vol $\beta$-NMF: 
\begin{equation}\label{eq:4}
\begin{aligned}
& \underset{W(:,j) \in \Delta^{F} \forall j, H\geq 0}{\text{min}}
& D_{\beta}(V|WH) + \lambda \text{vol}(W), 
\end{aligned}
\end{equation} 
where 
$\Delta ^{F}=\left\lbrace x \in \mathbb{R}^{F}_{+} \big| \sum_{i=1}^F x_{i}=1 \right\rbrace$ is the unit simplex, 
$\lambda$ is a penalty parameter 
and 
$\text{vol}(W)$ is a function that measures the volume spanned by the columns of $W$.  
In this paper, we use 
$\text{vol}(W)=\logdet(W^{T}W+\delta I)$, 
 where $\delta$ is a small positive constant that prevents $\logdet(W^{T}W)$ to go to $-\infty$ when $W$ tends to a rank-deficient matrix (that is, when $r=\text{rank}(W)< K$). 
The reason for using such a measure is that $\sqrt{\text{det}\left(W^{T}W \right)}/K!$ is the volume of the convex hull of the columns of $W$ and the origin. This measure is one of the most widely used ones, and has been shown to perform very well in practice~\cite{robustvol, ang2019algorithms}. 
Moreover, the criterion $\logdet(W^TW + \delta I)$ is able to distinguish two rank-deficient solutions and favour solutions for $W$ with smaller volume~\cite{volminleplat}. Finally, as we will illustrate in Section~\ref{sec:numexp}, this criterion is able to identify the right number of source even when $K$ is overestimated, by putting some rank-one factors to zero.

\subsection{Normalization and identifiability of min-vol $\beta$-NMF}\label{sec:model:norma}

As mentioned above, under some appropriate conditions on $V=WH$, minimum-volume NMF models will provably recover the ground-truth $(W,H)$ that generated $V$, up to permutation and scaling of the rank-one factors. 
The first identifiability results for minimum-volume NMF models assumed that the entries in each column of $H$ sum to one, that is, that $H^Te=e$ where $e$ is the all-one column vector whose \modif dimension \endmodif is clear from the context, meaning \modif that \endmodif $H$ is column stochastic~\cite{huang, lin2015identifiability}. 
Under this condition, each column of $V$  lies in the \textit{convex hull} of the columns of $W$; see Figure~\ref{fig:convexhull} for an illustration.   
%the corresponding NMF geometry is illustrated on . 
\begin{figure}[H]
\centering
\includegraphics[width=0.8\linewidth]{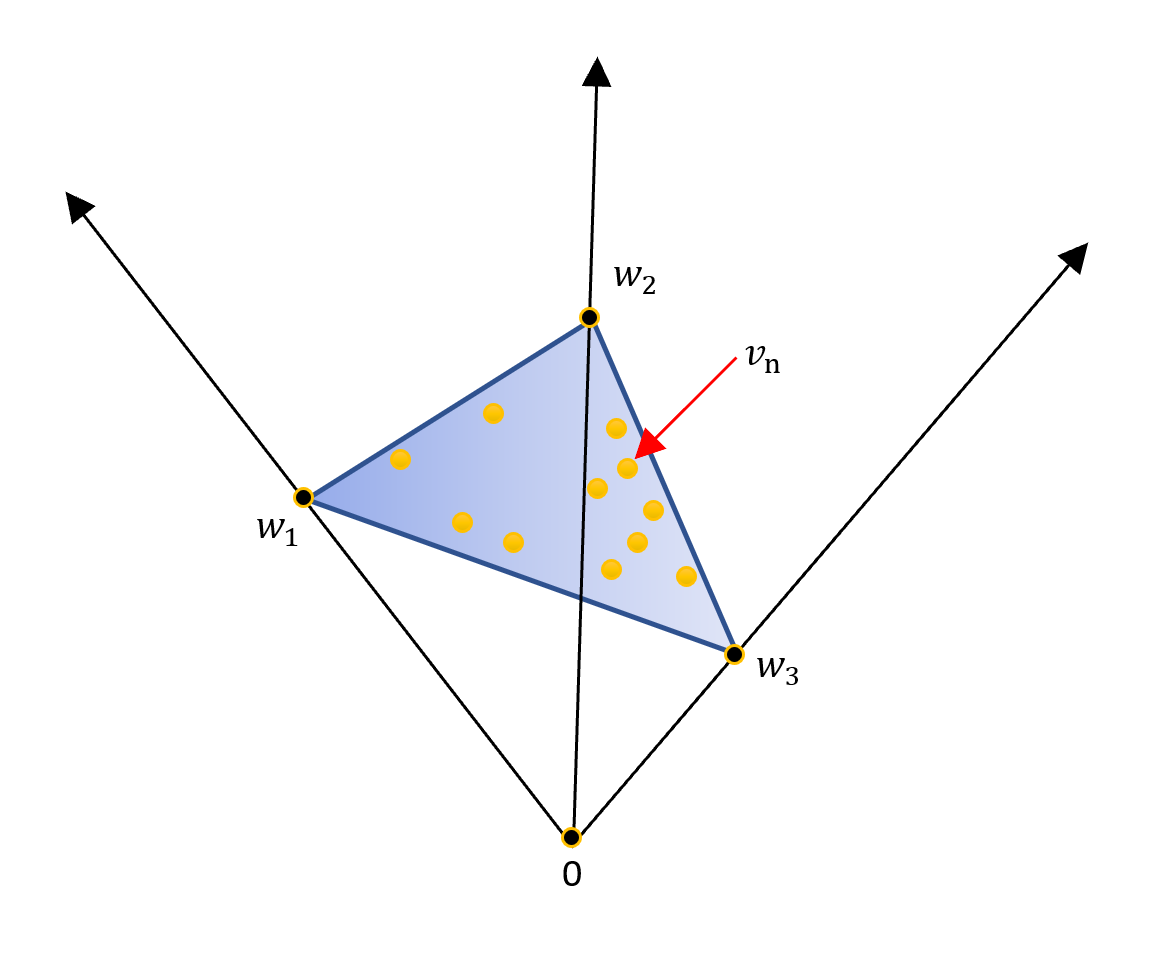}
\caption{Geometric interpretation of NMF when $K=3$ and $H$ is column stochastic~\cite{nmfidentifiable}.}
\label{fig:convexhull}
\end{figure}  
Under the three assumptions that 
(1)~$H$ is column stochastic, 
(2)~$W$ is full column rank, and 
(3)~$H$ satisfies the sufficiently scattered condition, 
 minimizing the volume of $\conv(W)$ such that $V=WH$ recover the true underlying factors, up to permutation and scaling. Intuitively, the sufficiently scattered condition requires $H$ to be sparse enough so that data points are located on the facets of $\conv(W)$; 
see Appendix~\ref{app:iden} for a formal definition. 
 The sufficiently scattered condition makes sense for most audio source data sets as it is reasonable to assume that, for most time points, only a few sources are active hence $H$ is sparse; see~\cite{fu2019nonnegative} for more details on the sufficiently scattered condition. 
Note that the sufficiently scattered condition is a generalization of the separability condition which requires $W=V\left(:, \mathcal{J} \right)$ for some index set $\mathcal{J}$ of size $K$~\cite{arora2016computing}.   
However, separability is a much stronger assumption as it requires that, for each sources, there exists a time point where only that source is active. 
Note that although min-vol NMF guarantees identifiability, the corresponding optimization problem~\eqref{eq:4} is still hard to solve in general, as for the original NMF problem~\cite{NMF_complexity}.

Despite this nice result, the constraint~$H^Te=e$ makes the NMF model less general and does not apply to all data sets. 
%The assumption naturally holds for example in hyperspectral unmixing~\cite{hyperspectralperspective}. 
In the case where the data does not naturally belong to a convex hull, one needs to normalize the data points so that their entries sum to one so that $H^Te=e$ can be assumed without loss of generality (in the noiseless case). 
This normalization can sometimes increase the noise and might greatly influence the solution, hence are usually not recommended in practice; see the discussion in~\cite{fu2019nonnegative}. 

In~\cite{nmfidentifiable}, authors show that  identifiability still holds when the condition that $H$ is column stochastic is relaxed to $H$ being row stochastic. 
As opposed to column stochasticity, 
row stochasticity of $H$ can be assumed without loss of generality since any factorization $WH$ can be properly normalized so that this assumption holds. In fact, 
$WH = \sum_{k=1}^K (a_k W(:,k)) (H(k,:)/a_k)$ for any $a_k > 0$ for $k=1,\dots,K$.
In other terms, letting $A$ be the diagonal matrix with $A(k,k) = a_k = \sum_{j=1}^n H(k,j)$ for $k=1,\dots,K$, we have $WH = (WA)(A^{-1}H) = W'H'$ where $H'=A^{-1}H$ is row stochastic.  

Similarly as done in~\cite{nmfidentifiable}, we prove in this paper that requiring $W$ to be column stochastic (which can also be made without loss of generality) also leads to  identifiability. 
Geometrically, the columns of $W$ are constrained to be on the unit simplex. Minimizing the volume still makes a lot of sense: we want the columns of $W$ to be as close as possible to one another within the unit simplex. In Appendix~\ref{app:iden}, we prove the following theorem. 
\begin{theorem} \label{mainth} 
Assume $V=W^{\#}H^{\#}$ with \mbox{$\text{rank}(V) = K$}, 
$W^{\#} \geq 0$ and $H^{\#}$ satisfies the sufficiently scattered condition (Definition~\ref{def4} in Appendix~\ref{app:iden}).   
Then the optimal solution of 
\begin{align}\label{eq:8b}  
\min_{W\in \mathbb{R}^{F \times K},H\in \mathbb{R}^{K \times N}} & \logdet\left( W^{T}W \right)  \\
 \text{such that} & \quad  V=WH, W^T e = e, H \geq 0, \nonumber    
\end{align}
recovers $(W^{\#},H^{\#})$ up to permutation and scaling. 
\end{theorem}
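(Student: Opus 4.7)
The plan is to adapt the now-standard min-vol NMF identifiability argument of~\cite{nmfidentifiable, fu2019nonnegative, lin2015identifiability}, swapping the roles of $W$ and $H$ so that the stochastic factor is $W$ rather than $H$. First, I would reduce to the case where $W^{\#}$ itself is column-stochastic. Writing $D = \diag(W^{\#T}e)$ (positive, since $\rank(V) = K$ forces every column of $W^{\#}$ to be nonzero), replace $(W^{\#},H^{\#})$ by $(W^{\#}D^{-1}, DH^{\#})$. This preserves $V = W^{\#}H^{\#}$, nonnegativity, and (by invariance of the sufficiently scattered condition under positive row-scaling of $H^{\#}$) the hypothesis on $H^{\#}$. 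The diagonal factor $D$ is precisely the scaling absorbed in the conclusion.

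Next, let $(W,H)$ be any feasible solution of~\eqref{eq:8b}. Because both $V = WH$ and $V = W^{\#}H^{\#}$ are rank-$K$ factorizations, $W$ and $W^{\#}$ share the same column space, so there is a unique invertible $Q \in \mathbb{R}^{K\times K}$ with $W = W^{\#}Q$ and $H = Q^{-1}H^{\#}$. The constraint $W^Te = e$ combined with $W^{\#T}e = e$ yields $Q^Te = e$, while $H \geq 0$ becomes $Q^{-1}H^{\#} \geq 0$. A direct determinant computation gives
\begin{equation*}
\logdet(W^TW) = \logdet(W^{\#T}W^{\#}) + 2\log|\det Q|,
\end{equation*}
so~\eqref{eq:8b} is equivalent to minimizing $|\det Q|$ over invertible $Q$ satisfying $Q^Te = e$ and $Q^{-1}H^{\#}\geq 0$. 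Since $Q = I$ is feasible, it suffices to show that every such $Q$ satisfies $|\det Q| \geq 1$, with equality if and only if $Q$ is a permutation matrix.

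This inequality is the main obstacle. Geometrically, the sufficiently scattered condition ensures that $\cone(H^{\#})$ contains a canonical fat second-order subcone of $\mathbb{R}^K_+$ and meets each of the $K$ facets of $\mathbb{R}^K_+$; the constraint $Q^{-1}H^{\#}\geq 0$ forces the simplicial cone $Q\mathbb{R}^K_+$ to contain $\cone(H^{\#})$, and $Q^Te = e$ fixes the affine hyperplane $\{x : e^Tx = 1\}$ on which the normalized extreme rays of $Q\mathbb{R}^K_+$ lie. Adapting the volume-inflation argument of~\cite{lin2015identifiability, fu2019nonnegative}, one shows that any such circumscribing simplex satisfies $|\det Q| \geq 1$, with equality only when $Q\mathbb{R}^K_+ = \mathbb{R}^K_+$, so that $Q$ is a nonnegative monomial matrix; combined with $Q^Te = e$, this forces $Q$ to be a permutation. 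Therefore the minimizers of~\eqref{eq:8b} are precisely the column permutations of the normalized $(W^{\#},H^{\#})$, which after undoing the normalization by $D$ yields recovery of the original $(W^{\#},H^{\#})$ up to permutation and scaling.
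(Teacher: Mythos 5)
Your skeleton is exactly the paper's: normalize $W^{\#}$ to be column stochastic, write any feasible $(W,H)$ as $(W^{\#}Q, Q^{-1}H^{\#})$ for an invertible $Q$ (the paper works with $A=Q^{-1}$), observe that $W^Te=e$ forces $e^TQ=e^T$ and that the objective differs from the ground-truth value by $2\log|\det Q|$, and reduce everything to showing $|\det Q|\ge 1$ with equality only for permutations. The gap is that this last inequality -- the only place where the sufficiently scattered condition does any work, and the step you yourself call ``the main obstacle'' -- is not proved but outsourced to ``adapting the volume-inflation argument'' of \cite{lin2015identifiability, fu2019nonnegative}. Those arguments are stated for a different normalization and a different geometric object (the volume of a simplex circumscribing the columns of $V$ when $H$ is column stochastic), so the adaptation is precisely what a proof here must supply. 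It is short, and it runs through the dual cone rather than through circumscribing simplices: $Q^{-1}H^{\#}\ge 0$ says each row $a_j$ of $A=Q^{-1}$ lies in $\cone(H^{\#})^{*}$; condition (1) of the sufficiently scattered condition plus the duality lemma for nested cones gives $\cone(H^{\#})^{*}\subseteq\mathcal{C}^{*}$, hence $\|a_j\|_2\le a_j^Te$; and since $e^TA=e^T$ implies $e^TAe=K$, Hadamard's inequality followed by the arithmetic--geometric mean inequality yields
\begin{equation*}
|\det A|\;\le\;\prod_{j=1}^{K}\|a_j\|_2\;\le\;\prod_{j=1}^{K}a_j^Te\;\le\;\Bigl(\tfrac{e^TAe}{K}\Bigr)^{K}=1 .
\end{equation*}
The equality analysis (forcing $A$ to be a permutation) is then inherited from \cite[Theorem 1]{nmfidentifiable}, as both you and the paper do.

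One smaller point: your justification of the initial normalization asserts ``invariance of the sufficiently scattered condition under positive row-scaling of $H^{\#}$.'' That is false in general: replacing $H^{\#}$ by $DH^{\#}$ maps $\cone(H^{\#})$ to $D\cone(H^{\#})$, and the second-order cone $\mathcal{C}$ is not invariant under $D$, so condition (1) of Definition~\ref{def4} need not survive. The condition that is invariant is column scaling of $H^{\#}$, which leaves $\cone(H^{\#})$ unchanged. The paper sidesteps this by simply assuming $e^TW^{\#}=e^T$ without loss of generality (i.e., the sufficiently scattered hypothesis is placed on the $H^{\#}$ corresponding to the already-normalized $W^{\#}$); if you want to state the reduction as you do, you should either adopt that convention or refrain from claiming row-scaling invariance.
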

\begin{proof}
See Appendix~\ref{app:iden}.  
\end{proof}

%Note that $W$ needs to be normalized otherwise $W$ would go to zero since $WH=\left( W/a \right) \left( Ha\right)$ for any $a > 0$. 
%These recovery conditions require that the columns of $V$ are sufficiently well spread in the convex hull generated by the columns of $W$; this is the so-called sufficiently scattered condition~\cite{nmfidentifiable}. 
%In audio source separation based on NMF techniques, normalization is the columns of $W$ to sum to one. However, a common normalization (from the Hyperspectral unmixing discipline) for models based on the minimization of the volume of one factor in the NMF is 
%Therefore, our claim is model (\ref{eq:9}) is a generalization for model (\ref{eq:8b}) in the rank-deficient case.
%\begin{equation}\label{eq:9}
%\begin{aligned}
%& \underset{W\in \mathbb{R}^{F \times K},H\in \mathbb{R}^{K \times N}}{\text{minimize}}
%& \logdet(W^{T}W+\delta I) \\
%& \text{subject to}
%& V=WH \\
%& & e^{T}W=e^{T},H \geq 0
%\end{aligned}
%\end{equation}

In noiseless conditions, 
replacing $W^Te=e$ with $He=e$ in~\eqref{eq:8b} leads to the same identifiability result; see \cite[Theorem 1]{nmfidentifiable}. 
Therefore, in noiseless conditions and under the conditions of Theorem~\ref{mainth}, both models return the same solution up to permutation and scaling. 
However, in the presence of noise, we have observed that the two models may behave very differently.
In fact, we advocate that the constraint $W^{T}e=e$ is better suited for noisy real-world problems, which we have observed on many numerical examples. In fact, we have observed that the normalization $W^{T}e=e$ is much less sensitive to noise and returns much better solutions.  
The reason is mostly twofold: 

\noindent (i) As described above, using the normalization $He=e$ amounts to multiply $W$ by a diagonal matrix whose entries are the $\ell_1$ norms of the rows of $H$. 
Therefore, the columns of $W$ that correspond to dominating (resp.\@ dominated) sources, that is, sources with much more (resp.\@ less) power and/or active at many (resp.\@ few) time points, will have much higher (resp.\@ lower) norm. 
Therefore, the term $\logdet(W^{T}W+\delta I)$ is much more influenced by the dominating sources and will have difficulties to penalize the dominated sources.  
In other terms, the use of the term $\logdet(W^{T}W+\delta I)$ with the normalization $He=e$ implicitly requires that the rank-one factors $W(:,k)H(k,:)$ for $k=1,\dots,K$ are well balanced, that is, have similar norms. This is not the case for many real (audio) signals. 

\noindent (ii) As it will be explained in Section~\ref{sec:algo}, the update of $W$ needs the computation of the matrix $Y$ which is the inverse of ${W}^{T}{W}+\delta I$--this terms appears in the gradient with respect to $W$ of the objective function. 
The numerical stability for such operations is related to the condition number of $W^{T}W+\delta I$. For a $\ell_{1}$ normalization on the columns of $W$, the condition  number is bounded above as follows:  
%\begin{align*}
$\cond( W^TW + \delta I ) 
 = \frac{ \sigma_{\max}(W^TW+\delta I) }{ \sigma_{\min}(W^TW+\delta I) }  
 = \frac{ \sigma_{\max}(W)^2 + \delta }{ \sigma_{\min}(W)^2 + \delta } 
 \leq \frac{ \left(\sqrt{K} \max_k ||W(:,k)||_2\right)^2 + \delta }{\delta} 
\leq  1 + \frac{K}{\delta}$,  
%\end{align*}
where $\sigma_{\min}(W)$ and $\sigma_{\max}(W)$ are the smallest and largest singular values of $W$, respectively. 
In the numerical experiments, we use $\delta = 1$.  
On the other hand, the normalization $He=e$ may lead to arbitrarily large values for the condition number of $W^TW+\delta I$, which we have observed numerically on several examples. This issue can be mitigated with the use of the normalization $He=\rho e$ for some $\rho > 0$ sufficiently large for which identifiabilty still holds~\cite{nmfidentifiable}. However, it still performs worse because of the first reason explained above. 

For these reasons, we believe that the model~\eqref{eq:8b} would also be better suited (compared to the normalization on $H$) in other contexts; for example for document classification~\cite{fu2018anchor}.

% Some numerical experiments have been conducted with a normalization on the rows of $H$ (see model \eqref{eq:8}). The best results have been obtained with model \eqref{eq:8b} even with values for $\rho > 1$ and a higher numerical stability is observed in practice as expected. 

\section{Algorithm for min-vol $\beta$-NMF} \label{sec:algo}

Most NMF algorithms alternatively update $H$ for $W$ fixed and vice versa, and we adopt this strategy in this paper. 
For $W$ fixed, \eqref{eq:4} is equivalent to standard NMF and we will use the MU that have already been derived in the literature~\cite{algoNMFlee, Fevotte_betadiv}. 

To tackle \eqref{eq:4} for $H$ fixed, let us consider  
\begin{equation}\label{eq:10}
\begin{aligned}
& \underset{W \geq 0}{\text{min}}
& & F(W)=D_{\beta}(V|WH)+ \lambda\logdet(W^{T}W+\delta I). 
\end{aligned}
\end{equation} 
Note that, for now, we have discarded the normalization on the columns of $W$. 
In our algorithm, we will use the update for $W$ obtained by solving~\eqref{eq:10} as a descent direction along with a line search procedure to integrate the constraint on $W$. This will ensure that the objective function $F$ is non-increasing at each iteration.  
In the following sections we derive MU for $W$ that decrease the objective in~\eqref{eq:10}. We follow the standard majorization-minimization framework~\cite{sun2017majorization}. 
First, an auxiliary function, which we denote $\bar{F}$, is constructed so that it majorizes the objective. 
An auxiliary function for $F$ at point $\tilde{W}$ is defined as follows.  
\begin{definition}\label{def2}
The function $\bar{F}(W|\tilde{W}): \Omega \times \Omega \rightarrow \mathbb{R}$ is an auxiliary function for $F\left( W \right):\Omega \rightarrow \mathbb{R}$ at $\tilde{W} \in \Omega$ if the conditions  
$\bar{F}(W|\tilde{W}) \geq F\left( W \right)$ 
for all $W \in \Omega$ and $\bar{F}(\tilde{W}|\tilde{W}) = F( \tilde{W} )$ 
are satisfied.
\end{definition}
Then, %$\bar{F}$ is minimized exactly. 
the optimization of $F$ can be replaced by an iterative process that minimizes $\bar{F}$. More precisely, the new iterate $W^{(i+1)}$ is computed by minimizing exactly the auxiliary function at the previous iterate $W^{(i)}$. This guarantees $F$ to decrease at each iteration. 
\begin{lemma}\label{lemmamono}
Let $W, W^{(i)} \geq 0$, and let $\bar{F}$ be an auxiliary function for $F$ at $W^{(i)}$. 
Then $F$ is non-increasing under the update
$W^{(i+1)} = \underset{W \geq 0}{\text{argmin}} \bar{F}(W|W^{(i)})$. 
\end{lemma}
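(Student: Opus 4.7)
The plan is to give a direct three-line chain of inequalities that combines the optimality of $W^{(i+1)}$ with the two defining properties of an auxiliary function from Definition~\ref{def2}; no additional structure of $F$ or $\bar{F}$ is required, so this is the standard generic descent argument of the majorization-minimization framework.

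First, I would note that the previous iterate $W^{(i)}$ is itself feasible (it is nonnegative), so it is a candidate in the minimization defining $W^{(i+1)}$. Optimality of $W^{(i+1)}$ as the argmin of $\bar{F}(\,\cdot\,|W^{(i)})$ over $\{W \geq 0\}$ therefore yields
\[
\bar{F}(W^{(i+1)}|W^{(i)}) \;\leq\; \bar{F}(W^{(i)}|W^{(i)}).
\]

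Next, I would invoke the two properties of Definition~\ref{def2}. The majorization property, applied at the point $W = W^{(i+1)} \in \Omega$, gives $F(W^{(i+1)}) \leq \bar{F}(W^{(i+1)}|W^{(i)})$, while the tangency property gives $\bar{F}(W^{(i)}|W^{(i)}) = F(W^{(i)})$. Chaining these with the inequality above produces
\[
F(W^{(i+1)}) \;\leq\; \bar{F}(W^{(i+1)}|W^{(i)}) \;\leq\; \bar{F}(W^{(i)}|W^{(i)}) \;=\; F(W^{(i)}),
\]
which is exactly the claimed non-increase of $F$ under the update.

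There is no real obstacle in this proof: the entire content is encoded in the definition of an auxiliary function, and the descent property is a one-line consequence. The genuine difficulty in using this lemma, addressed elsewhere in Section~\ref{sec:algo}, is constructing a tractable auxiliary function $\bar{F}$ for the nonconvex objective in~\eqref{eq:10} (that is, for the sum of a $\beta$-divergence term and the $\logdet$ penalty) whose minimizer over $W \geq 0$ can be written in closed form as a multiplicative update; the lemma itself only certifies that, once such an $\bar{F}$ is in hand, iterating its minimization is monotone.
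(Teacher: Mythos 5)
Your proof is correct and is essentially the same chain of inequalities as the paper's: tangency at $W^{(i)}$, optimality of $W^{(i+1)}$ over the feasible set containing $W^{(i)}$, and the majorization property at $W^{(i+1)}$. The only difference is cosmetic (you write the chain from $F(W^{(i+1)})$ up to $F(W^{(i)})$ while the paper writes it in the reverse direction), so there is nothing to add.
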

\begin{proof}
In fact, we have by definition that 
$F(W^{(i)}) = \bar{F}(W^{(i)}|W^{(i)}) 
\geq 
\underset{W}{\text{min}} \bar{F}(W|W^{(i)}) 
= \bar{F}(W^{(i+1)}|W^{(i)})\geq F(W^{(i+1)})$. 
\end{proof}

The most difficult part in using the majorization-minimization framework is to design an auxiliary function that is easy to optimize. Usually such auxiliary functions are separable (that is, there is no interaction between the variables so that each entry of $W$ can be updated independently) and convex.

\subsection{Separable auxiliary functions for $\beta$-divergences}\label{section31}

For the sake of completeness, we briefly recall the auxiliary function proposed in~\cite{Fevotte_betadiv} for the data fitting term. It consists in majorizing the convex part of the $\beta$-divergence using Jensen's inequality and majorizing the concave part by its tangent (first-order Taylor approximation). We have 
  \begin{equation}\label{eq:3}
  d_{\beta}(x|y)= \check{d}_{\beta}(x|y)+\hat{d}_{\beta}(x|y)+\bar{d}_{\beta}(x|y), 
  \end{equation}
where $\check{d}$ is convex function of $y$, $\hat{d}$ is a concave function of $y$ and $\bar{d}$ is a constant of $y$; see Table~\ref{table:conv_concav_decomp}.  
\begin{center}
\begin{table}[h!]
\begin{center}
\caption{Differentiable convex-concave-constant decomposition of the $\beta$-divergence under the form \eqref{eq:3}~\cite{Fevotte_betadiv}. }
\label{table:conv_concav_decomp}
\begin{tabular}{|c|c|c|c|}
\hline 
      & $\check{d}(x|y)$  &  $\hat{d}(x|y)$ & $\bar{d}(x)$   \\  \hline 
 $\beta=0$      
 & $xy^{-1}$     & $\log(y)$  & $x(\log(x)-1)$  \\
 $\beta \in [1,2] $ 
 & $d_{\beta}(x|y)$ & 0 & 0    \\ \hline  
\end{tabular} 
\end{center}
\end{table}
\end{center}

%Let us denote by $C(W) = D_{\beta}(V|WH)$. 
%the $\beta$-divergence term of the objective function $F(W)$. 
The function $D_{\beta}(V|WH)$ can be written as $\sum_{f} D_{\beta}(v_{f}|w_{f}H)$ where $v_{f}$ and $w_{f}$ are respectively the $f$th row of $V$ and $W$. 
Therefore we only consider the optimization over one specific row of \modif $W$\endmodif. To simplify notation, we denote iterates $w^{(i+1)}$ (next iterate) and $w^{(i)}$ (current iterate) as $w$ and $\tilde{w}$, respectively. 
%Outside the interval $\beta \in \left[1,2\right]$, the $\beta$-divergence between $x, y\geq 0$ for $x$ fixed can always be expressed as the sum of a convex, concave and constant parts as follows:

\begin{lemma}[\cite{Fevotte_betadiv}] \label{defG}
Let $\tilde{v} = \tilde{w}H$ \modif and $\tilde{w}$ be such that $\tilde{v_{n}} >0$ for all $n$ and $\tilde{w_{k}} >0$ for all $k$. \endmodif  
Then the function 
\begin{equation}\label{eq:20}
\begin{aligned}
G(w|\tilde{w})&=\sum_{n}\left[\sum_{k} \frac{\tilde{w_{k}}h_{kn}}{\tilde{v_{n}}}\check{d}(v_{n}|\tilde{v_{n}}\frac{w_{k}}{\tilde{w_{k}}})\right] +\bar{d}(v_{n})  \\
&+\left[\hat{d}^{'}(v_{n}|\tilde{v_{n}})\sum_{k}(w_{k}-\tilde{w_{k}})h_{kn}+\hat{d}(v_{n}|\tilde{v_{n}}) \right]
\end{aligned}
\end{equation} 
is an auxiliary function for $\sum_{n}d(v_{n}|\left[wH\right]_{n})$ at $\tilde{w}$.
\end{lemma}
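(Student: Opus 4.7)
The plan is to verify the two defining properties of an auxiliary function in Definition~\ref{def2}: the majorization $G(w|\tilde{w}) \geq \sum_n d_\beta(v_n|[wH]_n)$ for all admissible $w$, and the tightness $G(\tilde{w}|\tilde{w}) = \sum_n d_\beta(v_n|\tilde{v}_n)$. The key engine is the convex-concave-constant splitting in \eqref{eq:3} and Table~\ref{table:conv_concav_decomp}, which lets me majorize each piece of $d_\beta(v_n|[wH]_n)$ separately and then sum over $n$.

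First I would handle the convex part. Writing $[wH]_n = \sum_k w_k h_{kn}$, the trick is to insert the current iterate via the identity
\begin{equation*}
[wH]_n = \sum_k \frac{\tilde{w}_k h_{kn}}{\tilde{v}_n}\cdot \frac{\tilde{v}_n w_k}{\tilde{w}_k},
\end{equation*}
where the weights $\lambda_{kn} := \tilde{w}_k h_{kn}/\tilde{v}_n$ are nonnegative (using the positivity assumptions on $\tilde w$ and $\tilde v$) and satisfy $\sum_k \lambda_{kn} = 1$ since $\tilde{v}_n = \sum_k \tilde{w}_k h_{kn}$. Because $\check d_\beta(v_n|\cdot)$ is convex, Jensen's inequality yields
\begin{equation*}
\check{d}(v_n|[wH]_n) \le \sum_k \lambda_{kn}\,\check{d}\!\left(v_n\Big|\tilde{v}_n\frac{w_k}{\tilde{w}_k}\right),
\end{equation*}
which matches the first bracketed term in \eqref{eq:20}.

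Next I would treat the concave part by its tangent at $\tilde v_n$: since $\hat d(v_n|\cdot)$ is concave, the first-order Taylor expansion is a global upper bound, giving
\begin{equation*}
\hat{d}(v_n|[wH]_n) \le \hat{d}(v_n|\tilde{v}_n) + \hat{d}'(v_n|\tilde{v}_n)\bigl([wH]_n - \tilde{v}_n\bigr) = \hat{d}(v_n|\tilde{v}_n) + \hat{d}'(v_n|\tilde{v}_n)\sum_k (w_k - \tilde{w}_k)h_{kn},
\end{equation*}
which is exactly the second bracketed term in \eqref{eq:20}. Adding the constant term $\bar{d}(v_n)$ (which is independent of $w$) and summing over $n$ produces $G(w|\tilde w) \ge \sum_n d_\beta(v_n|[wH]_n)$.

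Finally I would check tightness at $w=\tilde w$. In the Jensen bound, setting $w_k = \tilde{w}_k$ makes each inner divergence equal to $\check{d}(v_n|\tilde{v}_n)$, and since the weights $\lambda_{kn}$ sum to one, the sum collapses to $\check{d}(v_n|\tilde{v}_n)$. In the concave bound, the linear correction $\sum_k (w_k - \tilde{w}_k)h_{kn}$ vanishes and only $\hat{d}(v_n|\tilde{v}_n)$ remains. Summing these together with $\bar{d}(v_n)$ and using the decomposition \eqref{eq:3} gives $G(\tilde w|\tilde w) = \sum_n d_\beta(v_n|\tilde{v}_n)$, completing the verification. The only subtlety I anticipate is bookkeeping the positivity hypotheses on $\tilde w$ and $\tilde v_n$ so that the weights $\lambda_{kn}$ and the ratios $w_k/\tilde w_k$ are well defined; this is precisely why the statement assumes $\tilde w_k > 0$ and $\tilde v_n > 0$, and no further case analysis is needed given the $\beta\in\{0\}\cup[1,2]$ entries of Table~\ref{table:conv_concav_decomp}.
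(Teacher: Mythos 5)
Your proof is correct and follows exactly the route the paper indicates for this lemma (which it recalls from \cite{Fevotte_betadiv} without a detailed proof): Jensen's inequality with the weights $\tilde{w}_k h_{kn}/\tilde{v}_n$ applied to the convex part $\check{d}$, the tangent upper bound on the concave part $\hat{d}$, and a direct check of tightness at $w=\tilde{w}$. Nothing is missing; the positivity hypotheses are used precisely where you say they are.
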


\subsection{A separable auxiliary function for the minimum-volume regularizer}\label{section32}

The minimum-volume regularizer $\logdet(W^{T}W+\delta I)$ is a non-convex function. 
However, it can be upper-bounded using the fact that $\logdet(.)$ is a concave function so that its first-order Taylor approximation provides an upper bound; see for example~\cite{robustvol}. For any positive-definite matrices $A$ and $B \in \mathbb{R}^{K \times K}$, 
we have:
\begin{align*}
 \logdet\left( B \right) & \leq \logdet\left( A \right)+ \tr\left( A^{-1} \left( B-A \right) \right) \\ 
& = \tr\left( A^{-1} B \right) + \logdet\left( A \right)- K \,.
\end{align*}
This implies that for any $W, Z \in \mathbb{R}^{F \times K}$, we have 
\begin{equation} \label{eq:29}
\begin{aligned}
& \logdet(W^{T}W+\delta I) \leq l(W,Z),   
\end{aligned}
\end{equation}
where $l(W,Z) = \tr\left( Y W^{T}W \right) + \logdet\left( Y^{-1} \right)- K$, $Y=(Z^{T}Z+\delta I)^{-1}$ with $\delta >0$. 
Note that $Z^{T}Z+\delta I$ is positive definite hence is invertible and its inverse $Y$ is also positive definite.
Finally $l(W,Z)$ is an auxiliary function for $\logdet(W^{T}W+\delta I)$ at $Z$. 
However, it is quadratic and not separable hence non-trivial to optimize over the nonnegative orthant.  
The non-constant part of $l(W,Z)$ can be written as $ \sum_{f}w_{f} Y w_{f}^T$ 
where $w_{f}$ is the $f$th row of $W$. 
Henceforth we will focus on one particular row vector $w$ with $l\left( w \right) =  w^{T} Y w$ which will be further considered as a column vector of size $K \times 1$. 
\begin{lemma}\label{lem1quad}
Let $w, \tilde{w} \in \mathbb{R}^K_+$ \modif be such that $\tilde{w_{k}} >0$ 
for all~$k$\endmodif, 
$Y=Y^{+}-Y^{-}$ 
with $Y^{+}=\max\left(Y,0 \right)$ 
and $Y^{-}=\max\left(-Y,0 \right)$,  and 
\modif $\Phi\left( \tilde{w} \right)$ \endmodif be the diagonal matrix 
\modif $\Phi\left( \tilde{w} \right) = 
\diag \left(  2
 \frac{[Y^{+}\tilde{w}+Y^{-}\tilde{w}]}{[\tilde{w}]} \right)$  where $\frac{\left[ A \right]}{\left[ B \right]}$ is the component-wise division between $A$ and $B$ \endmodif, and $\Delta w = w-\tilde{w}$. 
Then 
\begin{equation} \label{auxftcl}
\bar{l}(w|\tilde{w}) = l(\tilde{w}) +  \Delta w^{T}  \nabla l \left( \tilde{w}\right) + 
\frac{1}{2} \Delta w^{T} \modif \Phi(\tilde{w}) \endmodif \Delta w,  
\end{equation} 
is a separable auxiliary function for $l\left( w \right)$=$w^{T}Yw$ at $\tilde{w}$. 
\end{lemma}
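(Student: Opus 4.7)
The plan is as follows. Since $l(w)=w^T Y w$ is quadratic and $Y$ is symmetric, its second-order Taylor expansion at $\tilde{w}$ is exact:
\[
l(w) = l(\tilde{w}) + \Delta w^T \nabla l(\tilde{w}) + \Delta w^T Y \Delta w,
\]
where $\nabla l(\tilde{w}) = 2Y\tilde{w}$. Thus showing $\bar{l}(w|\tilde{w}) \geq l(w)$ reduces to establishing the matrix inequality
\[
\Delta w^T Y \Delta w \;\leq\; \tfrac{1}{2}\,\Delta w^T \Phi(\tilde{w}) \Delta w.
\]
The tangency condition $\bar{l}(\tilde{w}|\tilde{w}) = l(\tilde{w})$ is immediate because $\Delta w = 0$ when $w = \tilde{w}$, and the separability is clear from the fact that $\Phi(\tilde{w})$ is diagonal (so the quadratic term splits into a sum of squares in the components of $\Delta w$), while the linear term is trivially separable.

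The heart of the argument is therefore the matrix inequality above, which I would obtain from the decomposition $Y = Y^+ - Y^-$ together with the following elementary bound: for every entrywise non-negative symmetric matrix $B \in \mathbb{R}^{K\times K}$, every $\tilde{w} > 0$ and every $x \in \mathbb{R}^K$,
\[
\bigl|\, x^T B x\, \bigr| \;\leq\; \sum_{k=1}^{K} \frac{[B\tilde{w}]_k}{\tilde{w}_k}\, x_k^2.
\]
To derive this, I would start from the weighted AM-GM inequality $2|x_i||x_j| \leq \tfrac{\tilde{w}_j}{\tilde{w}_i} x_i^2 + \tfrac{\tilde{w}_i}{\tilde{w}_j} x_j^2$, multiply by $B_{ij} \geq 0$, sum over $i,j$, and use the symmetry of $B$ to collapse the two terms into $\sum_i \tfrac{[B\tilde w]_i}{\tilde w_i} x_i^2$; the absolute value on the left then follows from $|x_i x_j| \geq x_i x_j$ and $|x_i x_j| \geq -x_i x_j$.

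Applying this bound once to $B = Y^+$ in the form $\Delta w^T Y^+ \Delta w \leq \sum_k \tfrac{[Y^+\tilde{w}]_k}{\tilde{w}_k}(\Delta w_k)^2$, and once to $B = Y^-$ in the form $-\Delta w^T Y^- \Delta w \leq \sum_k \tfrac{[Y^-\tilde{w}]_k}{\tilde{w}_k}(\Delta w_k)^2$, then adding, yields
\[
\Delta w^T Y \Delta w \;\leq\; \sum_{k} \frac{[Y^+\tilde{w}]_k + [Y^-\tilde{w}]_k}{\tilde{w}_k}\,(\Delta w_k)^2 \;=\; \tfrac{1}{2}\,\Delta w^T \Phi(\tilde{w})\,\Delta w,
\]
which is exactly what is needed. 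The main subtlety (and the only place one can slip up) is the handling of $Y^-$: because $Y^-$ is merely entrywise non-negative and not in general positive semidefinite, one cannot bound $-\Delta w^T Y^- \Delta w$ by $0$, and must instead invoke the absolute-value form of the bound above. Everything else is direct algebra.
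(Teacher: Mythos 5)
Your proof is correct. You take the same first step as the paper --- the exact second-order Taylor expansion of the quadratic $l$, reducing everything to the positive semidefiniteness of $\Phi(\tilde w)-2Y$, i.e.\ to $\Delta w^{T}Y\Delta w\le\tfrac{1}{2}\Delta w^{T}\Phi(\tilde w)\Delta w$ --- but you close the argument by a different key lemma. The paper applies the congruence $M_{ij}=\tilde w_i\left[\Phi(\tilde w)-2Y\right]_{ij}\tilde w_j$ and shows that $M$ is symmetric, has nonnegative diagonal, and is diagonally dominant, then cites the standard criterion that such matrices are positive semidefinite. You instead prove directly that $\left|x^{T}Bx\right|\le\sum_k\frac{[B\tilde w]_k}{\tilde w_k}x_k^2$ for any entrywise nonnegative symmetric $B$ via the weighted AM--GM inequality $2|x_i||x_j|\le\frac{\tilde w_j}{\tilde w_i}x_i^2+\frac{\tilde w_i}{\tilde w_j}x_j^2$, and apply it separately to $Y^{+}$ and $Y^{-}$; your insistence on the absolute-value form for the $Y^{-}$ term (which is only entrywise nonnegative, not PSD, so $-\Delta w^{T}Y^{-}\Delta w$ cannot simply be discarded) is exactly the right point of care, and the resulting sum matches $\tfrac{1}{2}\Delta w^{T}\Phi(\tilde w)\Delta w$ exactly. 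The two routes are essentially congruent --- your weighted AM--GM step is what the paper's rescaling by $\diag(\tilde w)$ reduces to the unweighted inequality underlying the diagonal-dominance criterion --- but yours is fully self-contained and makes visible where each entry of $\Phi(\tilde w)$ is used, while the paper's is shorter at the price of invoking an external matrix-analysis result.
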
 
\begin{proof}
See Appendix~\ref{app:lem3}.  
\end{proof}

\modif
\begin{remark}[Choice of the auxiliary function]
A simpler choice for the auxiliary function would be  to replace  
$\Phi(\tilde{w})$ with $2\lambda_{\max}(Y) I$ 
where $\lambda_{\max}(Y)$ is the largest eigenvalue of $Y$ (the constant $2$ appears  because $l\left( w \right)=  w^{T}Yw$ while there is a factor $1/2$ in front of $\Phi(\tilde{w})$). 
However, it would lead to a worse approximation. In particular if $Y$ is a diagonal matrix (since $Y \succ 0$, these diagonal elements are positive), our choice gives $\Phi(\tilde{w}) = 2Y$ for any $\tilde{w} > 0$, meaning that the auxiliary function matches perfectly the function $l\left( w \right)$.  
This would not be the case for the choice $2\lambda_{\max}(Y) I$ (unless $Y$ is a scaling of the identity matrix).  
\end{remark}
\endmodif

\subsection{Auxiliary function for min-vol $\beta$-NMF}  \label{section33}

Based on the auxiliary functions presented in Sections \ref{section31} and \ref{section32}, we can directly derive a  separable auxiliary function $\bar{F}(W|\tilde{W})$ for min-vol $\beta$-NMF~\eqref{eq:4}. 
\begin{corollary}\label{corro1}
For $W,H\geq 0$, $\lambda >0$, $Y=(\tilde{W}^{T}\tilde{W}+\delta I)^{-1}$ with $\delta > 0$ and the constant $c=\logdet\left( Y^{-1} \right)+ K$, the function 
\begin{equation*}\label{eq:40}
\begin{aligned}
& \bar{F}(W|\tilde{W}) 
= \sum_{f} G\left( w_f |\tilde{w}_f \right) + \lambda \left( \sum_{f} \bar{l}\left( w_f |\tilde{w}_f\right) + c\right), 
\end{aligned}
\end{equation*}
where $G$ is given by \eqref{eq:20} and 
$\bar{l}$ by \eqref{auxftcl}, 
is a convex and separable auxiliary function for  
$F(W)=D_{\beta}(V|WH) + \lambda \logdet(W^{T}W+\delta I)$ at $\tilde{W}$. 
\end{corollary}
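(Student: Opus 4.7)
The proof is essentially a bookkeeping exercise that glues together Lemma~\ref{defG} (for the data-fitting term) and Lemma~\ref{lem1quad} (for the minimum-volume term) by summing over the rows of $W$. My plan is to verify, in order, the three properties required by Definition~\ref{def2} plus the extra two adjectives (``convex'' and ``separable'') appearing in the statement.

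First I would split $F(W)$ into its two pieces. The data-fitting term decomposes row-wise as $D_\beta(V|WH) = \sum_f D_\beta(v_f|w_f H)$, and Lemma~\ref{defG} gives, for each row index $f$, an auxiliary function $G(w_f|\tilde{w}_f)$ at $\tilde{w}_f$. Summing over $f$ produces an auxiliary function $\sum_f G(w_f|\tilde{w}_f)$ for $D_\beta(V|WH)$ at $\tilde{W}$; the two defining inequalities of Definition~\ref{def2} are preserved by summation. For the regularizer, I would invoke~\eqref{eq:29} to replace $\logdet(W^T W + \delta I)$ by $l(W,\tilde{W}) = \tr(Y W^T W) + \logdet(Y^{-1}) - K$, which is tight at $W=\tilde{W}$. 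Using the row decomposition noted right before Lemma~\ref{lem1quad}, $\tr(Y W^T W) = \sum_f w_f Y w_f^T$, each quadratic $l(w_f) = w_f Y w_f^T$ is in turn majorized at $\tilde{w}_f$ by $\bar{l}(w_f|\tilde{w}_f)$ from Lemma~\ref{lem1quad}, with equality when $w_f = \tilde{w}_f$. Adding the constant $c = \logdet(Y^{-1}) + K$ and summing over $f$ yields $\sum_f \bar{l}(w_f|\tilde{w}_f) + c$ as an auxiliary function for $\logdet(W^TW+\delta I)$ at $\tilde{W}$.

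Next I would combine the two pieces. A positive linear combination (with weights $1$ and $\lambda>0$) of auxiliary functions is itself an auxiliary function for the corresponding combination of the original functions. This gives $\bar{F}(W|\tilde{W}) \geq F(W)$ for all $W\geq 0$ and $\bar{F}(\tilde{W}|\tilde{W}) = F(\tilde{W})$. Separability follows directly from the structure of the summands: $G(w_f|\tilde{w}_f)$ is separable in the entries of $w_f$ by the construction in Lemma~\ref{defG} (the Jensen step decouples the sum inside the convex term, and the concave part has been linearized), and $\bar{l}(w_f|\tilde{w}_f)$ is separable because its quadratic term uses the diagonal matrix $\Phi(\tilde{w}_f)$. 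Since different rows $w_f$ do not share variables, the whole $\bar{F}$ is separable in the entries of $W$.

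Convexity comes for free from how the two auxiliary functions were constructed: in $G$ the convex part of the $\beta$-divergence is kept and the concave part has been replaced by its tangent (affine, hence convex), while in $\bar{l}$ the quadratic form has the positive (diagonal) curvature $\Phi(\tilde{w}_f)$. Summing convex functions preserves convexity. I do not anticipate any real obstacle here; the only place where one must be slightly careful is in checking that the hypotheses of Lemmas~\ref{defG} and~\ref{lem1quad} (positivity of $\tilde{v}_n$, $\tilde{w}_k$ and positive definiteness of $Y$, which holds since $\tilde W^T \tilde W + \delta I \succ 0$) are in force at every iterate~$\tilde{W}$, which is automatic under the assumption $W,H \geq 0$ together with $\delta>0$ stated in the corollary.
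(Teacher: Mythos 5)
Your proof is correct and takes exactly the same route as the paper, whose entire argument is the single line that the result ``follows directly from Lemma~\ref{defG}, Equation~\eqref{eq:29} and Lemma~\ref{lem1quad}''; you have simply made explicit the row-wise decomposition, the fact that a positive linear combination of auxiliary functions is an auxiliary function for the corresponding combination, and the separability and convexity checks that the paper leaves implicit.
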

\begin{proof} 
This follows directly from 
Lemma~\ref{defG}, 
Equation~\eqref{eq:29} and 
Lemma~\ref{lem1quad}.
\end{proof}

%Note that auxiliary function $\bar{F}$ is convex as the feasible domain (non-negative orthant) is a convex set and it can be easily shown that Hessian matrix $\nabla^{2}\bar{F}$ is positive semi-definite for $\lambda \geq 0$. Given the convexity and the separability of function $\bar{F}$, the global optimum is obtained by cancelling the gradient. 

In the following section, we provide explicitly MU for the KL divergence ($\beta =1$) 
by finding a closed-form solution for the minimization of $\bar{F}$. 
In Appendix~\ref{app:is}, we provide the MU for the IS divergence ($\beta = 0$). 
Due to the lack of space, the other cases are not treated explicitly but can be in a similar way.  
For the same reason, we will only compare KL NMF models in the numerical experiments (Section~\ref{sec:numexp}).

\subsection{Algorithm for min-vol KL-NMF}

As before, let us focus on a single row of $W$, denoted $w$, as the \modif objective \endmodif function $F(W)$ is separable by row.     
For $\beta=1$, 
the derivative of the auxiliary function $\bar{F}(w|\tilde{w})$ with respect to a specific coefficient $w_{k}$ is given by 
$\nabla_{w_{k}} \bar{F}(w|\tilde{w}) = \sum_{n}h_{kn}-\sum_{n}h_{kn}\frac{\tilde{w}_{k} v_{n}}{w_{k}\tilde{v}_{n}}+2\lambda \left[ Y \tilde{w} \right]_{k} + 2 \lambda \left[ \diag \left( \frac{Y^{+}\tilde{w}+Y^{-}\tilde{w}}{\tilde{w}} \right) \right]_{k} w_{k} 
- 2 \lambda \left[ \diag \left( \frac{Y^{+}\tilde{w}+Y^{-}\tilde{w}}{\tilde{w}} \right) \right]_{k} \tilde{w}_{k}$.  
%For $\lambda >0$, $w_{k} >0$, if we pose : 
%\begin{equation*}\label{eq:45}
%\begin{aligned}
%&a= 2 \lambda \left[ \diag \left( \frac{Y^{+}\tilde{w}+Y^{-}\tilde{w}}{\tilde{w}} \right) \right]_{k}\\
%&b = \sum_{n}h_{kn}- 2 \lambda \left[ \diag \left( \frac{Y^{+}\tilde{w}+Y^{-}\tilde{w}}{\tilde{w}} \right) \right]_{k} \tilde{w}_{k} + 2\lambda \left[ Y \tilde{w} \right]_{k} \\
%&= \sum_{n}h_{kn} - 4\lambda \left[ Y^{-} \tilde{w} \right]_{k} \\
%&c= -\sum_{n}h_{kn}\frac{\tilde{w}_{k} v_{n}}{\tilde{v}_{n}} \\
%\end{aligned}
%\end{equation*} 
Due to the separability, we set the derivative to zero to obtain the closed-form solution, which is given in Table~\ref{table:KL-MU} in matrix form. 
\begin{center}
\begin{table*}[h!]
\begin{center}
\caption{Multiplicative update for min-vol KL-NMF. }
\label{table:KL-MU}
\begin{tabular}{l}
$\hspace{2cm}  W=\tilde{W} \odot  \frac{\left[\left[ \left[ J_{F,N}  H^{T} - 4\lambda \left( \tilde{W} Y^{-}\right)\right]^{.2}+8 \lambda \tilde{W} \left( Y^{+}+Y^{-}\right)\odot \left( \frac{\left[ V \right]}{\left[\tilde{W}H\right]} H^{T} \right)\right]^{.\frac{1}{2}}- 
\left( J_{F,N}  H^{T} - 4\lambda \left( \tilde{W} Y^{-}\right) \right) 
\right]}
{\left[ 4\lambda \tilde{W} \left( Y^{+}+Y^{-}\right)\right] }$,  \\  
where $A \odot B$ 
(resp.\@ $\frac{\left[ A \right]}{\left[ B \right]}$) is the Hadamard product 
(resp.\@ division) between $A$ and $B$, $A^{(.\alpha)}$ is the element-wise $\alpha$ exponent of $A$, \\ 
$J_{F,N}$ is the $F$-by-$N$ all-one matrix, 
and $Y=Y^{+}-Y^{-}=(\tilde{W}^{T}\tilde{W}+\delta I)^{-1}$ with $\delta > 0$, $Y^{+} \geq, Y^{-} \geq 0$, $\lambda > 0$.  
\end{tabular} 
\end{center}
\end{table*}
\end{center}
\modif 
Note that although the closed-form solution has a negative term in the numerator of the multiplicative factor (see Table~\ref{table:KL-MU}), they always remain nonnegative given that $V, H$ and $\tilde{W}$ are nonnegative.   
In fact, the term before the minus sign is always larger than the  term after the minus sign: $J_{F,N}  H^{T} - 4\lambda ( \tilde{W} Y^{-})$ is squared (component wise) and added a positive term, hence the component-wise square root of that result is larger than $J_{F,N}  H^{T} - 4\lambda ( \tilde{W} Y^{-})$. 
%, since  within the square root is squared, then added a nonnegative appears twice
 \endmodif

Algorithm~\ref{KLminvol} summarizes our algorithm to tackle~\eqref{eq:4} for $\beta=1$ which we refer to as min-vol KL-NMF. 
Note that the update of $H$ (step 4) is the one from~\cite{algoNMFlee}. 
More importantly, note that we have incorporated a line-search for the update of $W$. In fact, although the MU for $W$ are guaranteed to decrease the objective function, they do not guarantee that $W$ remains normalized, that is, that $||W(:,k)||_1 = 1$ for all $k$. Hence, we normalize $W$ after it is updated (step 10), and we normalize $H$ accordingly so that $WH$ remains unchanged. 
When this normalization is performed, the 
$\beta$-divergence part of $F$ is unchanged but the minimum-volume penalty will change so that the objective function $F$ might increase.  
In order to guarantee non-increasingness, 
we integrate a simple backtracking line-search procedure; see steps 11-16 of Algorithm~\ref{KLminvol}. \modif 
In summary, our MU provide a descent direction that preserved nonnegativity of the iterates, and we use a projection and a simple backtracking line-search to guarantee the monotonicity of the objective function, as in standard projected gradient descent methods. 
\endmodif

\algsetup{indent=2em}
\begin{algorithm}[ht!]
\caption{min-vol KL-NMF \label{KLminvol}}
\begin{algorithmic}[1] 
\REQUIRE A matrix $V \in \mathbb{R}^{M \times T}$, an initialization $H \in \mathbb{R}^{K \times T}_+$, an initialization $W  \in \mathbb{R}^{M \times K}$ , a factorization rank $K$, a maximum number of iterations maxiter, min-vol weight $\lambda>0$ and $\delta>0$
\ENSURE A rank-$K$ NMF $(W,H)$ of $V \approx WH$ with $W \geq 0$ and $H \geq 0$ . 
    \medskip  
\STATE $\gamma=1, Y=\left(W^{T}W+\delta I \right)^{-1}$
\FOR{$i$ = 1 : maxiter}
	\STATE \emph{\% Update of matrix $H$}
	\STATE  $H \leftarrow H \odot \frac{\left[  W^{T}\left( \frac{\left[ V\right]}{\left[ WH\right]} \right) \right]}{\left[  W^{T} J_{F,N} \right]}$ \quad 
	\STATE \emph{\% Update of matrix $W$}
	\STATE $Y \leftarrow \left(W^{T}W+\delta I \right)^{-1}$
	\STATE $Y^{+} \leftarrow \text{max}\left(Y,0\right)$
	\STATE $Y^{-} \leftarrow \text{max}\left(-Y,0\right)$
	\STATE $W^{+}$ is updated according to Table~\ref{table:KL-MU} 
	\STATE 
	$(W^{+}_{\gamma},H_{\gamma}) 
	= \text{normalize}\left( W^{+} , H \right)$
	\STATE \emph{\% Line-search procedure}
	\WHILE{$F\left( W^{+}_{\gamma},H_{\gamma}\right) > F\left( W,H\right) $}
		\STATE $\gamma \leftarrow \gamma \times 0.8$
\STATE $W^{+}_{\gamma} \leftarrow \left( 1-\gamma \right)W + \gamma W^{+}$ 
		\STATE $(W^{+}_{\gamma},H_{\gamma}) \leftarrow \text{normalize}\left(W^{+}_{\gamma}  , H\right)$
	\ENDWHILE
	\STATE $(W,H) \leftarrow (W^{+}_{\gamma},H_{\gamma})$ 
	\STATE \emph{\% Update of $\gamma$ \modif to avoid a vanishing stepsize \endmodif}
	\STATE $\gamma \leftarrow \text{min}\left(1,\gamma \times 1.2\right)$
\ENDFOR
\end{algorithmic}  
\end{algorithm}

It can be verified that the computational complexity of the min-vol KL-NMF is asymptotically equivalent to the standard MU for $\beta$-NMF, that is, it requires $\mathcal{O}\left(FNK \right)$ operations per iteration.  
\modif 
Indeed, all the main operations include matrix products with a complexity of $\mathcal{O}\left(FNK \right)$ and element-wise operations on matrices of size $F \times K$ or $K \times N$. 
Note that the inversion of the $K$-by-$K$ matrix $(W^T W + \delta I)$ requires $\mathcal{O}\left(K^3 \right)$ operations which is dominated by $\mathcal{O}\left(FNK \right)$ since $K \leq \min(F,N)$ (in fact, typically $K \ll \min(F,N)$ hence this term is negligible).  
Therefore, although Algorithm~\ref{KLminvol} will be slower than the baseline KL-NMF (that is, the standard MU) because of the additional terms to be computed and the line-search, the asymptotical computational cost is the same; see Table~\ref{table:runtimeperf} for runtime comparison. 
\endmodif

\section{Numerical experiments} \label{sec:numexp}

In this section we report an experimental comparative study of baseline KL-NMF, min-vol KL-NMF (Algorithm~\ref{KLminvol}) and sparse KL-NMF~\cite{Leroux} applied to the spectrogram of two monophonic piano sequences and a synthetic mix of a bass and drums. For the two monophonic piano sequences, the audio signals are true life signals with standard quality.  
Note that the sequences are made of pure piano notes, the number $K$ should therefore correspond to the number of notes present into the mixed signals. 
The comparative study is performed for several values of $K$ with a focus on the case where the factorization rank $K$ is overestimated. For all simulations, random initializations are used for $W$ and $H$, the best results among 5 runs are kept for the comparative study. 
In all cases, we use a Hamming window of size $F$=1024, and 50\% overlap between two frames. 
Sparse KL-NMF has a similar structure as min-vol KL-NMF, with a penalty parameter for the sparsity enforcing regularization. 
To tune these two parameters, we have used the same strategy for both methods: 
we manually tried a wide range of values and report the best results. 
%For the penlaty parameter $\lambda$ in min-vol KL-NMF, and for the 
The code is available from~\url{bit.ly/minvolKLNMF} (code written in MATLAB R2017a), and can be used to rerun directly all experiments below. They were run on a laptop computer with Intel Core
i7-7500U CPU $@$ 2.70GHz 4 and 32GB memory. 
%The values of the parameters for each experiment is available as well.  

\paragraph{Mary had a little lamb}\label{paraMary} The first audio sample is the first measure of ``Mary had a little lamb". The sequence is composed of three notes; $E_{4}$, $D_{4}$ and $C_{4}$, played all at once. The recorded signal is 4.7 seconds long and downsampled to $f_{s}=16000$Hz yielding $T$=75200 samples. STFT of the input signal $x$ yields a temporal resolution of 16ms and a frequency resolution of 31.25Hz, so that the amplitude spectrogram $V$ has $N$=294 frames and $F$=257 frequency bins. The musical score is shown on Figure~\ref{fig:mary_representations}. 
\begin{figure}[h]
\centering 
        \includegraphics[width=0.35\textwidth]{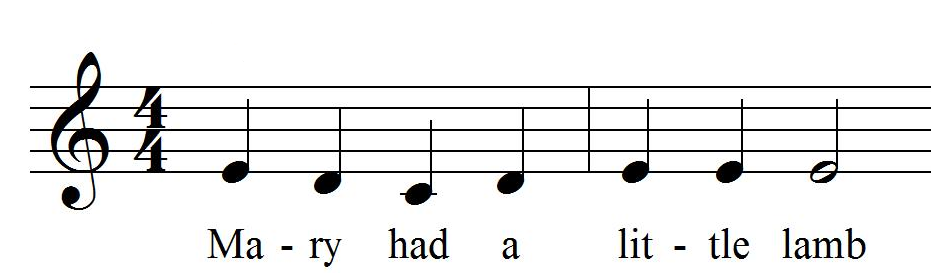}
\caption{
Musical score of ``Mary had a little lamb".
}
\label{fig:mary_representations}
\end{figure} 
All NMF algorithms were run for 200 iterations which allowed them to converge. 
Figure~\ref{fig:compa_mary1} presents the columns of $W$ (dictionary matrix) and the rows of $H$ for baseline KL-NMF and min-vol KL-NMF with $K=3$. 
\begin{figure}
    \centering  %add desired spacing between images, e. g. ~, \quad, \qquad, \hfill etc. 
      %(or a blank line to force the subfigure onto a new line)
    \begin{subfigure}[b]{0.23\textwidth}
        \includegraphics[width=\textwidth]{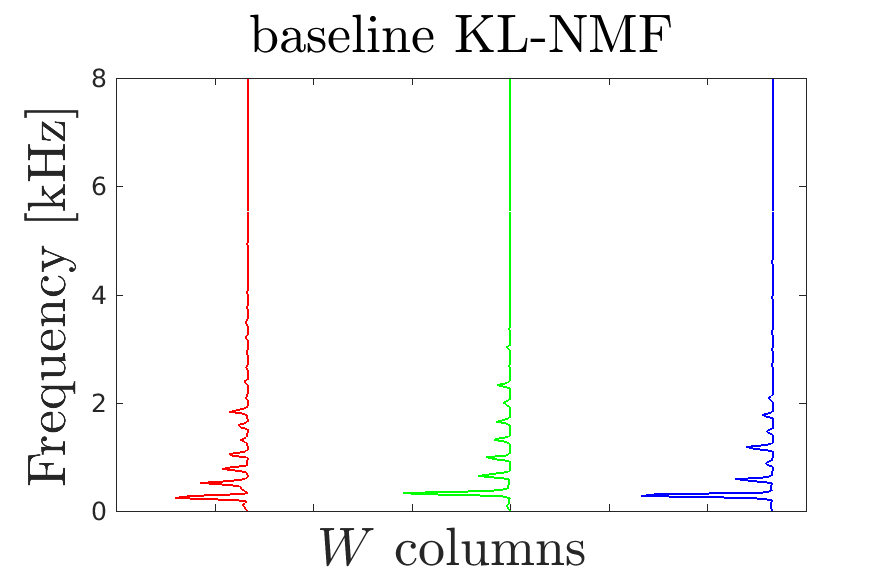}
        \includegraphics[width=\textwidth]{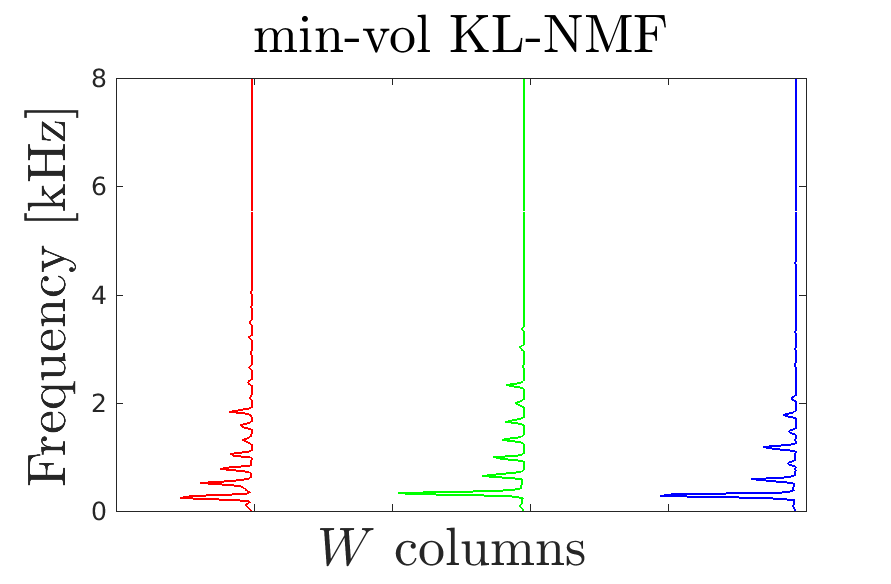}
        \includegraphics[width=\textwidth]{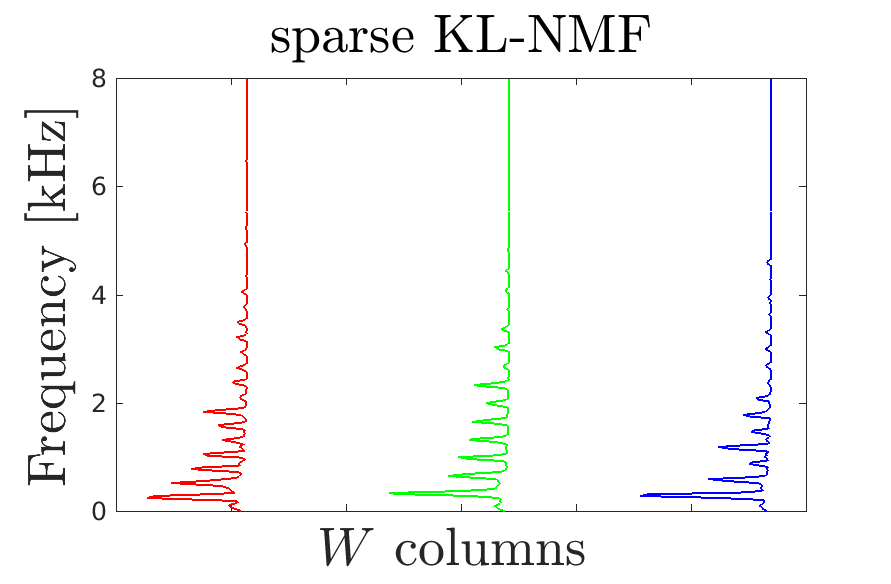}
        \caption{Columns of $W$}
    \end{subfigure}
    ~ %add desired spacing between images, e. g. ~, \quad, \qquad, \hfill etc. 
    %(or a blank line to force the subfigure onto a new line)
    \begin{subfigure}[b]{0.23\textwidth}
        \includegraphics[width=\textwidth]{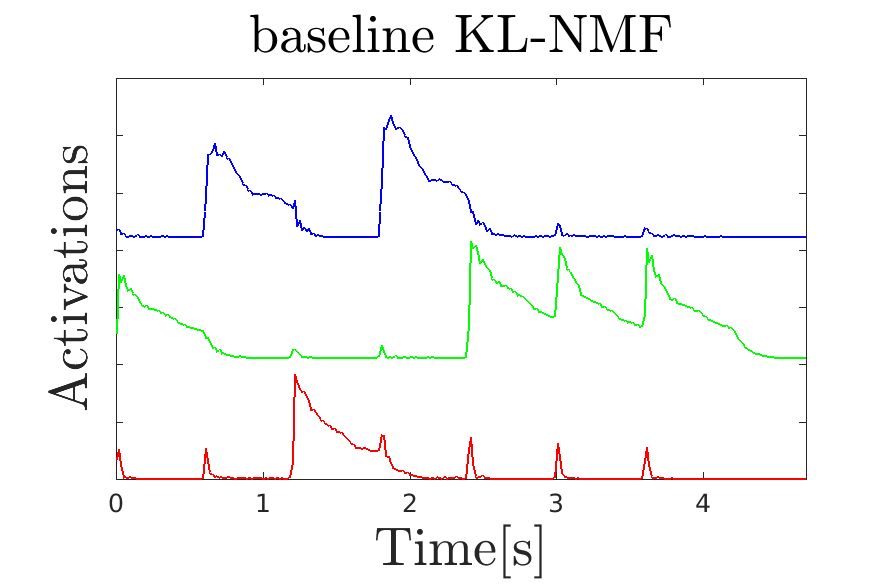}
        \includegraphics[width=\textwidth]{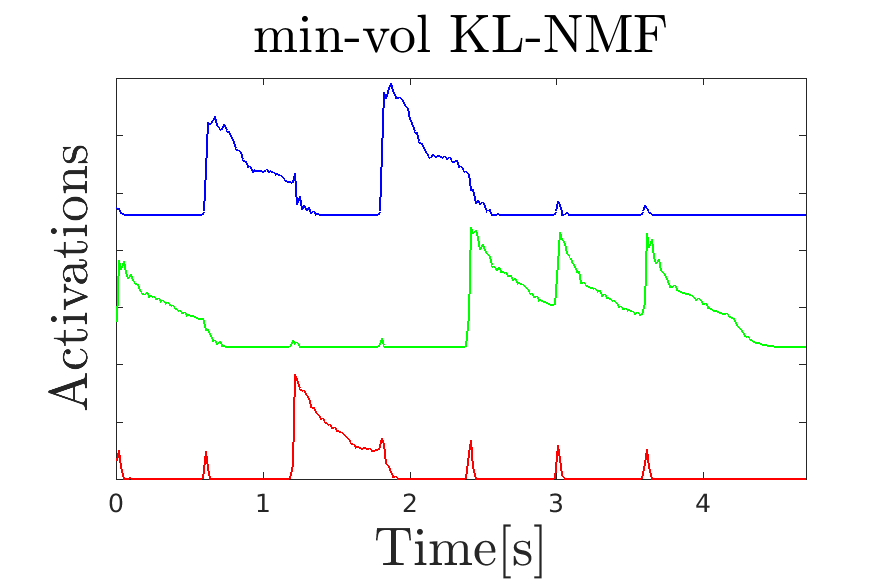}
        \includegraphics[width=\textwidth]{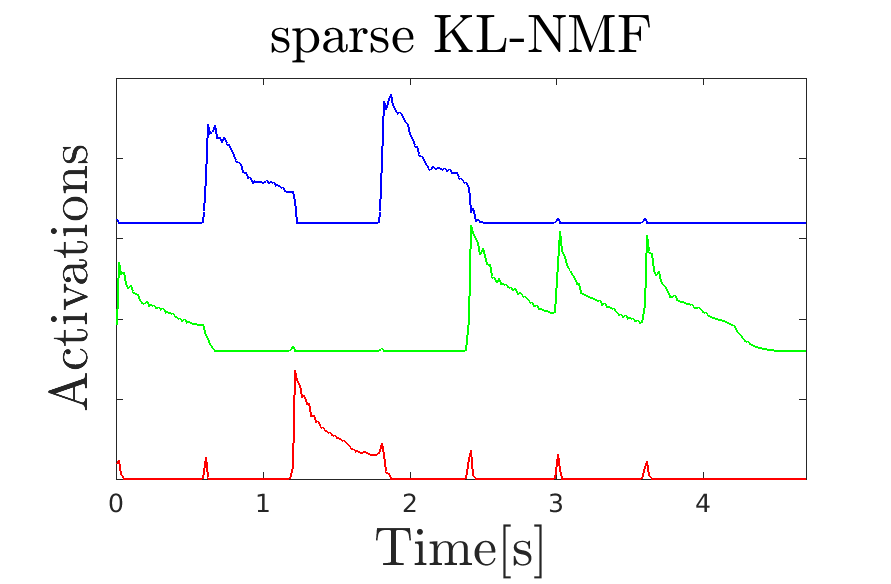}
        \caption{Rows of $H$}
    \end{subfigure}
    \caption{Comparative study of baseline KL-NMF (top), min-vol KL-NMF (middle) and sparse KL-NMF (bottom) applied to ``Mary had a little lamb" amplitude spectrogram with $K$=3.}\label{fig:compa_mary1}
\end{figure} 
Figure~\ref{fig:compa_mary2} presents the time-frequency masking coefficients. These coefficients are computed as follows 
\begin{equation*}
\begin{aligned}
& \text{mask}_{f,n}^{(k)}=\frac{\hat{X}_{f,n}^{(k)}}{\sum_{k}\hat{X}_{f,n}^{(k)}} \qquad \text{with } k=1,...,K \,,
\end{aligned}
\end{equation*}
where $\hat{X}^{(k)} = W(:,k)H(k,:)$ is the estimated source $k$. The masks are nonnegative and sum to one for each pair $\left( f,n \right)$. 
This representation allows to identify visually whether the NMF algorithm was able to separate the sources properly. 
\begin{figure}
    \centering
    \begin{subfigure}[b]{0.3\textwidth}
        \includegraphics[width=\textwidth]{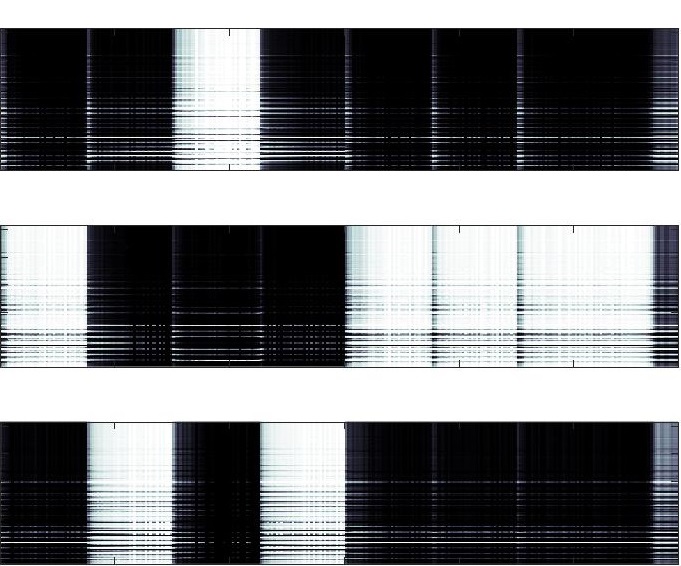}
        \caption{baseline KL-NMF}
        \label{fig:baseline_kl_masks1}
    \end{subfigure}
    ~ %add desired spacing between images, e. g. ~, \quad, \qquad, \hfill etc. 
      %(or a blank line to force the subfigure onto a new line)
    \begin{subfigure}[b]{0.3\textwidth}
        \includegraphics[width=\textwidth]{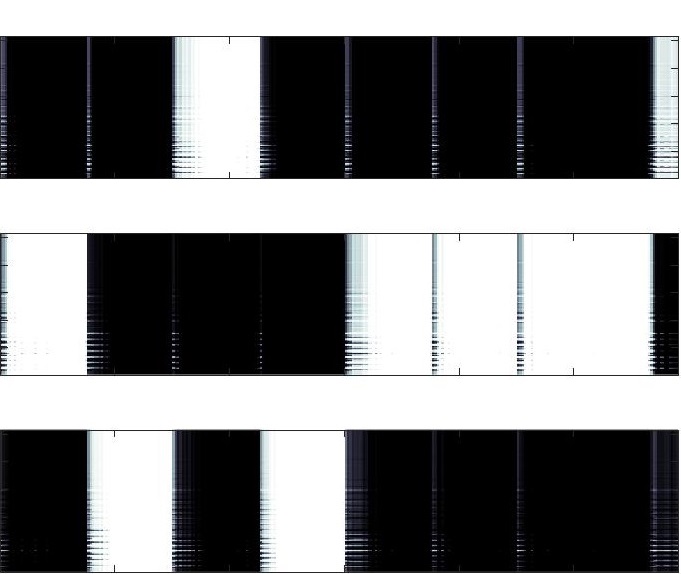}
        \caption{min-vol KL-NMF}
        \label{fig:minvol_kl_masks1}
    \end{subfigure}
    \begin{subfigure}[b]{0.3\textwidth}
        \includegraphics[width=\textwidth]{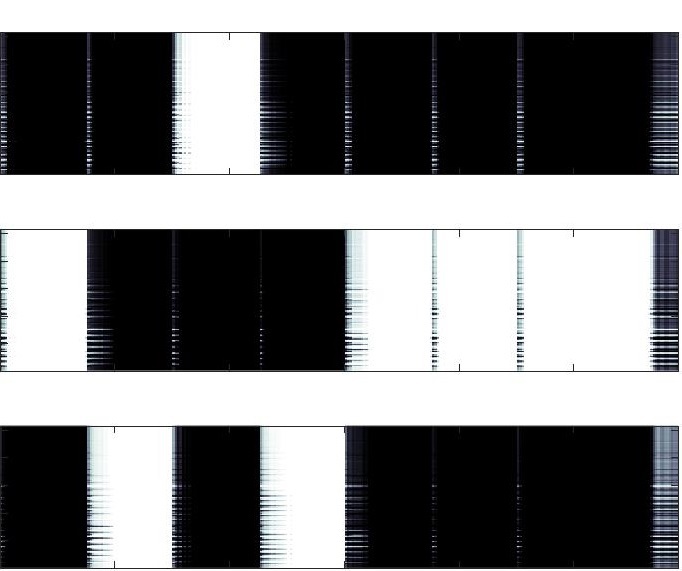}
        \caption{sparse KL-NMF}
        \label{fig:sparse_kl_masks1}
    \end{subfigure}
    \caption{Masking coefficients obtained with baseline KL-NMF (top), min-vol KL-NMF (middle) and sparse KL-NMF (bottom) applied to ``Mary had a little lamb" amplitude spectrogram with $K$=3.}\label{fig:compa_mary2}
\end{figure} 
All the simulations give a nice separation with similar results for $W$ and $H$. The activations are coherent with the sequences of the notes.  However, Figure~\ref{fig:compa_mary2} shows that min-vol KL-NMF and sparse KL-NMF  provide a better separation in terms of time-frequency localization compared to the baseline KL-NMF. 

%In order to validate the nature of the source estimates, Table \ref{table:mary_peaks} gives the frequency peaks corresponding to the One-lined, Two-lined and Three-lined octaves for the simulations compared to equal temperament theoretical values. As it can be observed, peaks for the three notes are nicely estimated.
%\begin{center}
%\begin{table}[h!]
%\begin{center}
%\caption{Comparison of Frequency peaks (expressed in Hz) of the One-lined, Two-lined and Three-lined octaves obtained with baseline KL-NMF and min-vol KL-NMF}
%\label{table:mary_peaks}
%\begin{tabular}{|c|c|c|c|c|}
%  \hline
%  \multicolumn{2}{|c|}{Notes/Octaves} & One-lined & Two-lined & Three-lined \\
%  \hline
%     $C_4$     & Theoretical & 262  & 523  & 1046.5  \\
% 
%      & Measured (NMF) & 250  & 531.3  & 1031   \\
% 
%         %& Measured (KL-NMF)  & 250 & 531.3  & 1031   \\
%         
%         %& Measured (DR-NMF)  & 250 & 531.3  & 1031   \\
%  \hline
%  	 $D_4$    & Theoretical & 294  & 587  & 1175  \\
% 
%      & Measured (NMF) & 281.3  & 593.8  & 1188   \\
%  
%         %& Measured (KL-NMF)  & 281.3  & 593.8  & 1188   \\
%         
%         %& Measured (DR-NMF)  & 281.3  & 593.8  & 1188   \\
%  \hline
%  	$E_4$    & Theoretical & 330 & 659  & 1318.5  \\
%  	
%       & Measured (NMF) & 343.8  & 656.3  & 1313   \\
%  
%        %& Measured (KL-NMF)  & 343.8  & 656.3  & 1313   \\
%        
%        %& Measured (DR-NMF)  & 343.8  & 656.3  & 1313   \\
%  \hline
%\end{tabular} 
%\end{center}
%\end{table}
%\end{center}

We now perform the same experiment but using $K$=7. 
Figure~\ref{fig:compa_mary3} presents the results.  
This situation corresponds to the situation where the factorization rank is overestimated. Figure~\ref{fig:compa_mary4} presents the time-frequency masking coefficients.
\begin{figure}
 %add desired spacing between images, e. g. ~, \quad, \qquad, \hfill etc. 
      %(or a blank line to force the subfigure onto a new line)
    \begin{subfigure}[b]{0.235\textwidth}
        \includegraphics[width=\textwidth]{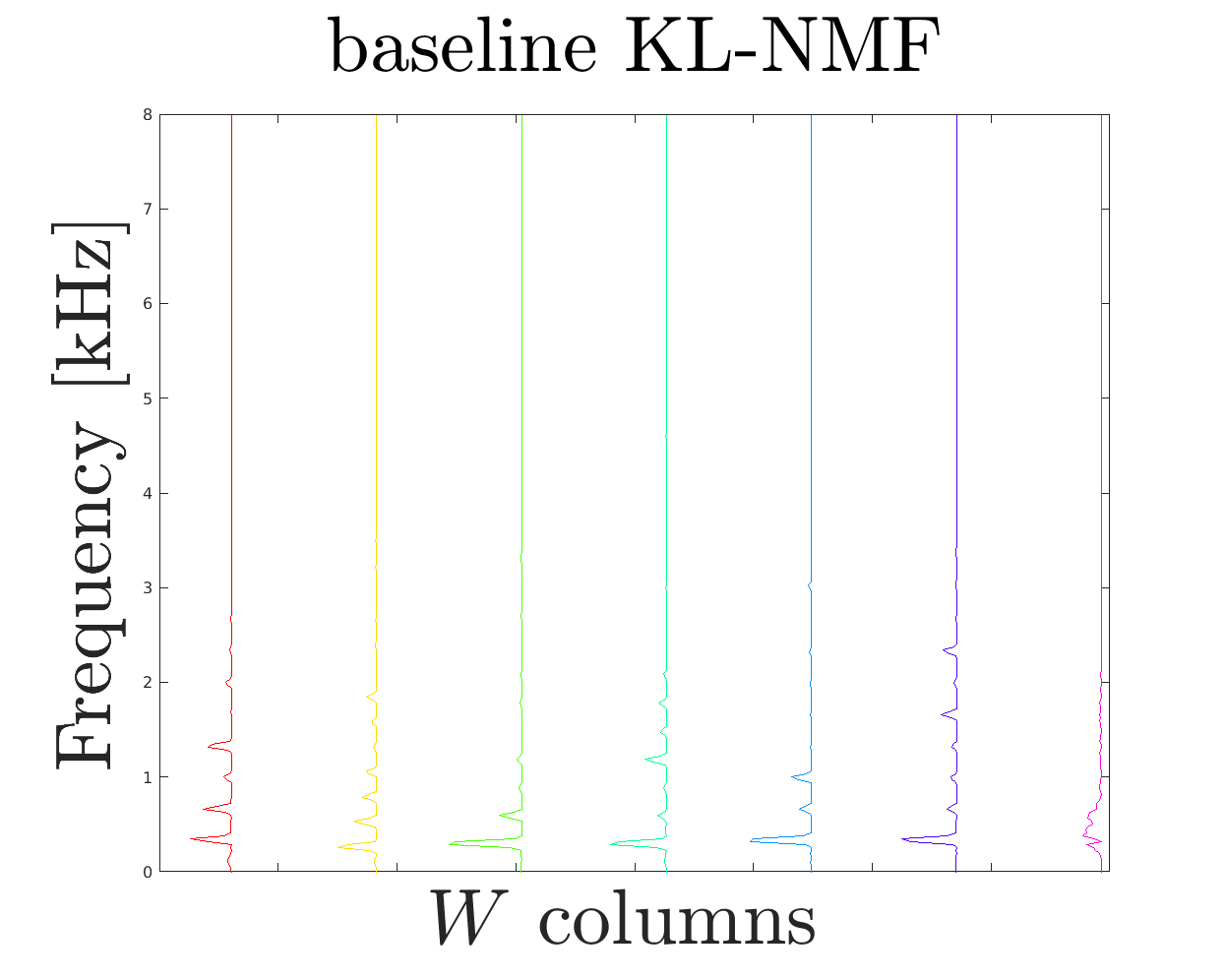}
        \includegraphics[width=\textwidth]{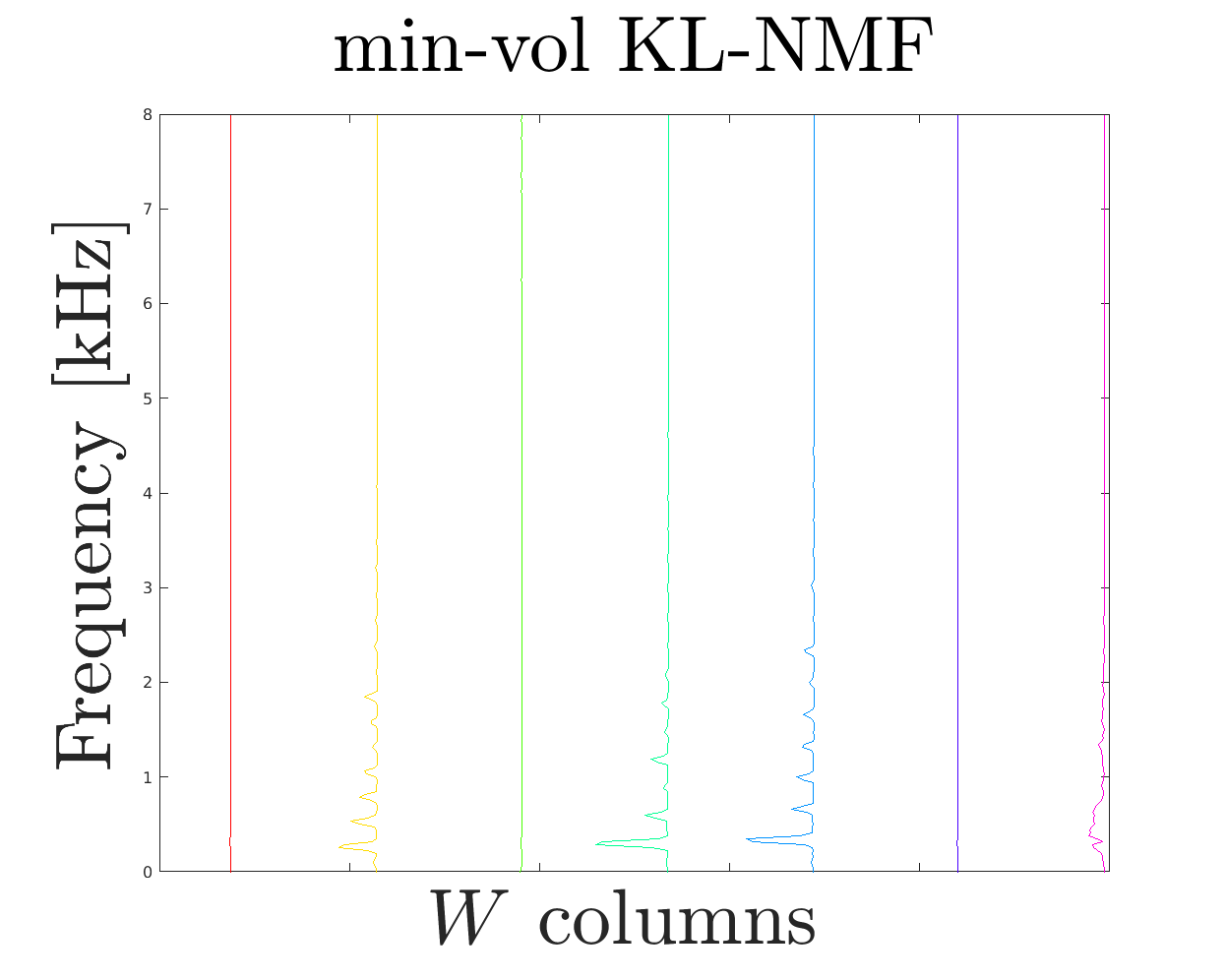}
        \includegraphics[width=\textwidth]{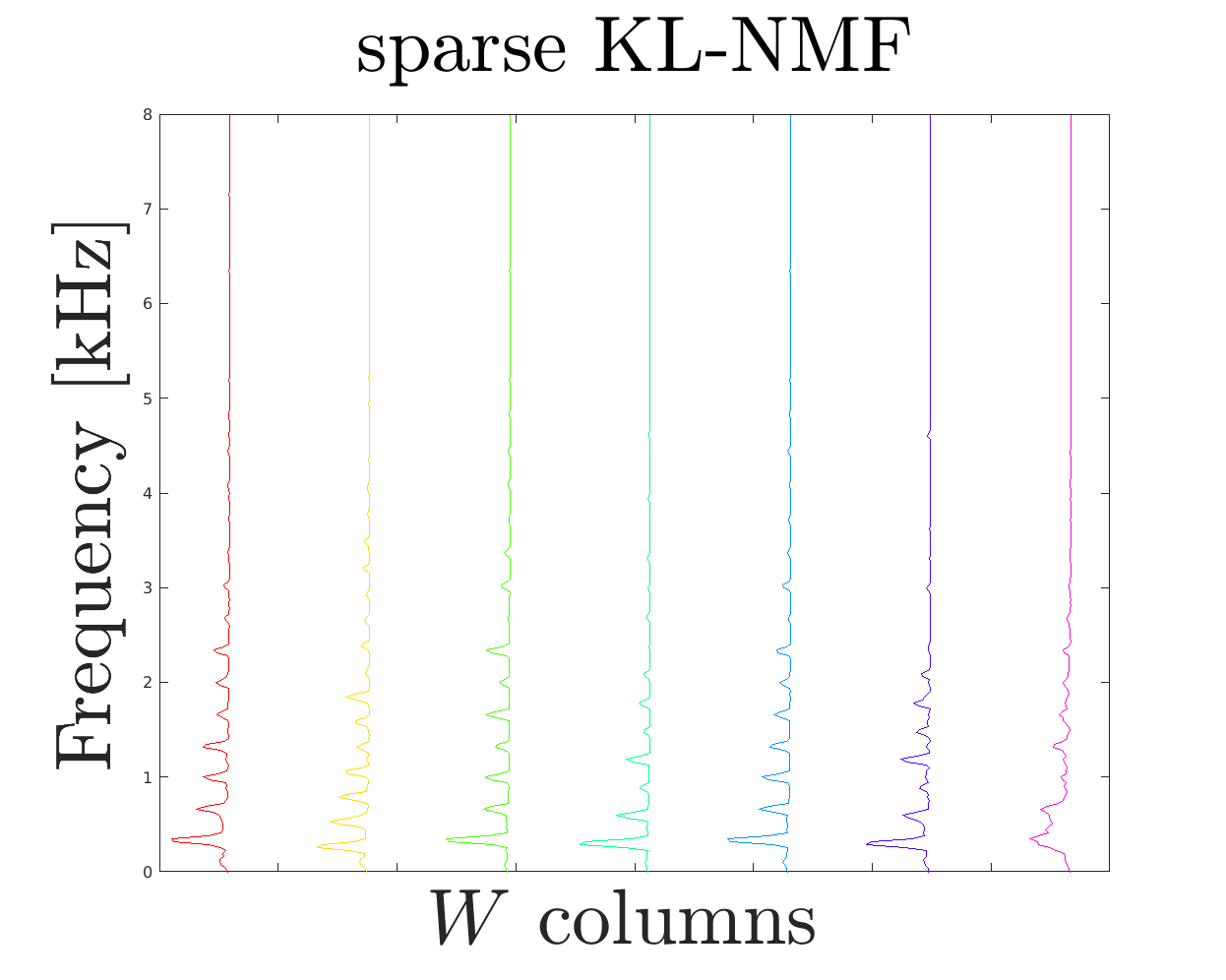}
        \caption{Columns of $W$}
        \label{fig:basis_mary2}
    \end{subfigure}
    ~ %add desired spacing between images, e. g. ~, \quad, \qquad, \hfill etc. 
    %(or a blank line to force the subfigure onto a new line)
    \begin{subfigure}[b]{0.235\textwidth}
        \includegraphics[width=\textwidth]{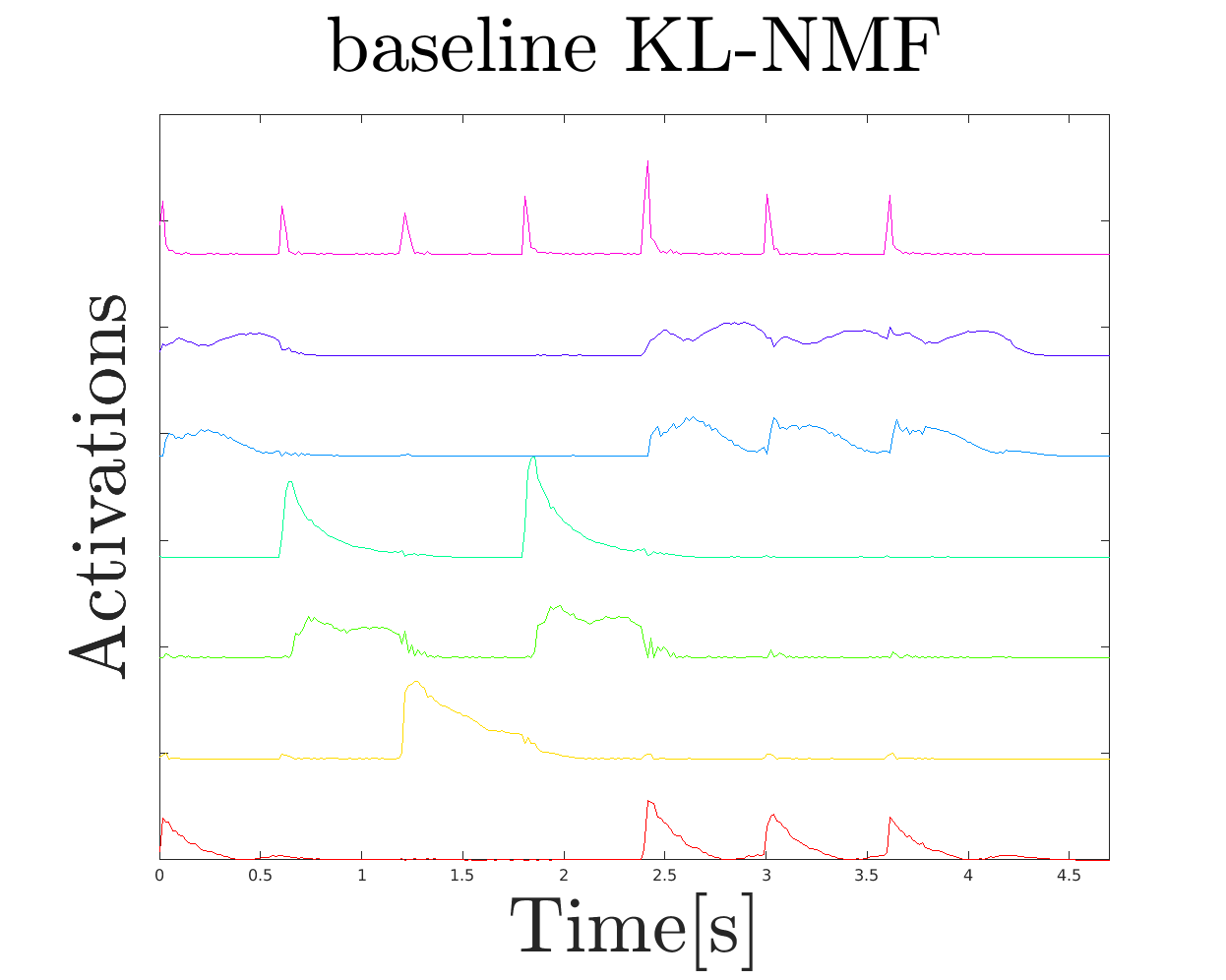}
        \includegraphics[width=\textwidth]{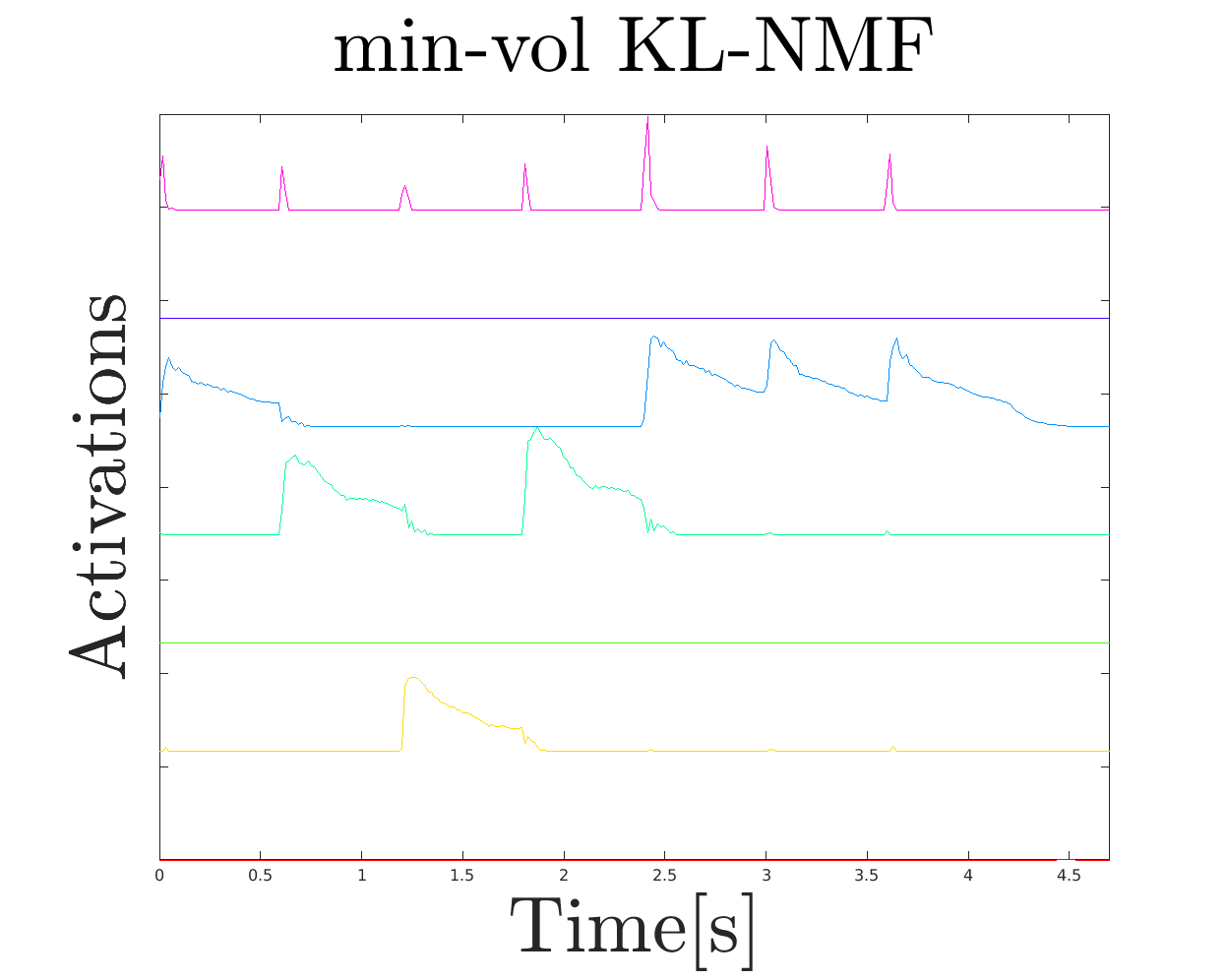}
        \includegraphics[width=\textwidth]{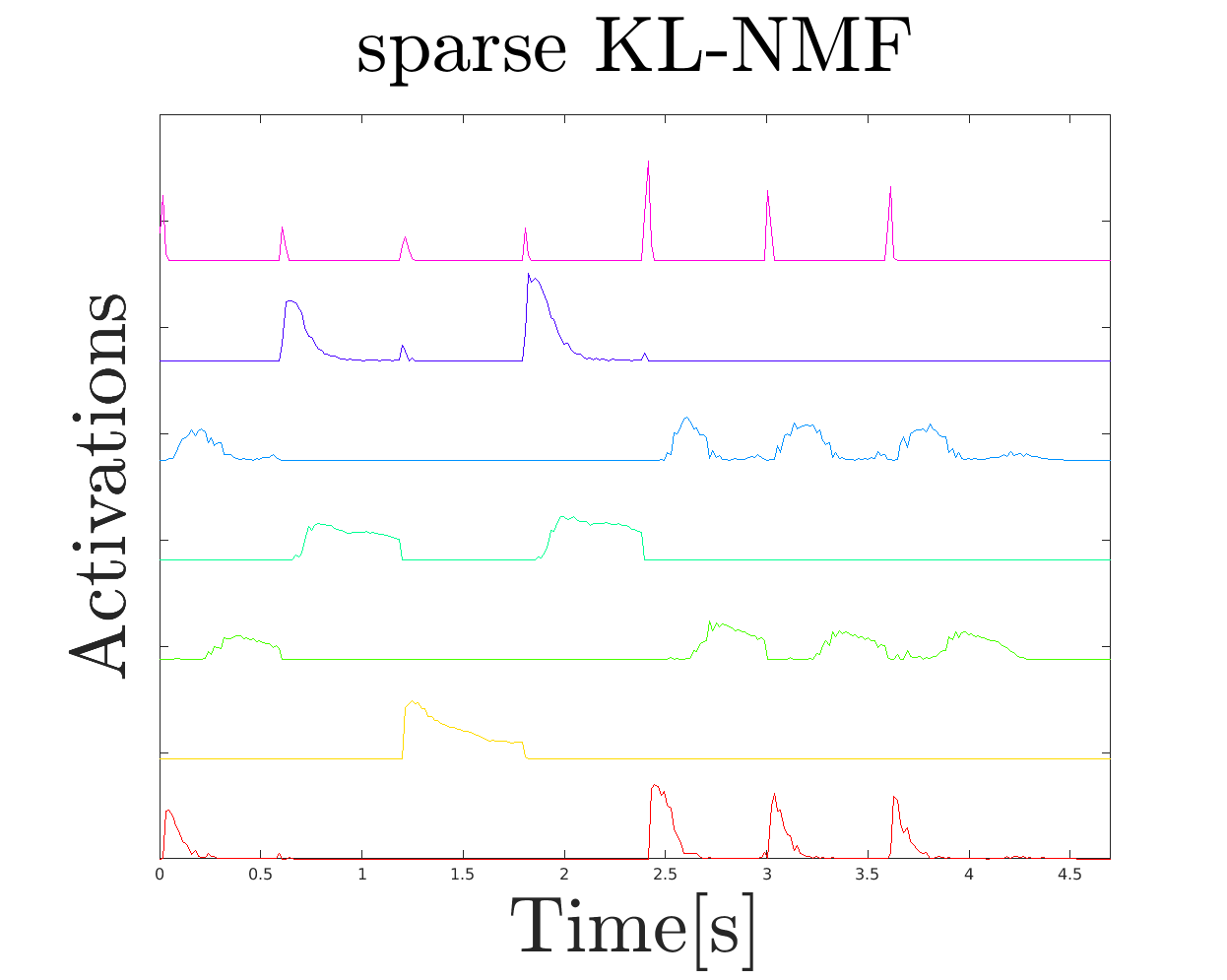}
        \caption{Rows of $H$}
        \label{fig:activations_mary2}
    \end{subfigure}
    \caption{Comparative study of baseline KL-NMF (top), min-vol KL-NMF (middle) and sparse KL-NMF (bottom) applied to ``Mary had a little lamb" amplitude spectrogram with $K$=7. \label{fig:compa_mary3}}
\end{figure}
\begin{figure}
    \centering
    \begin{subfigure}[b]{0.32\textwidth}
        \includegraphics[width=\textwidth]{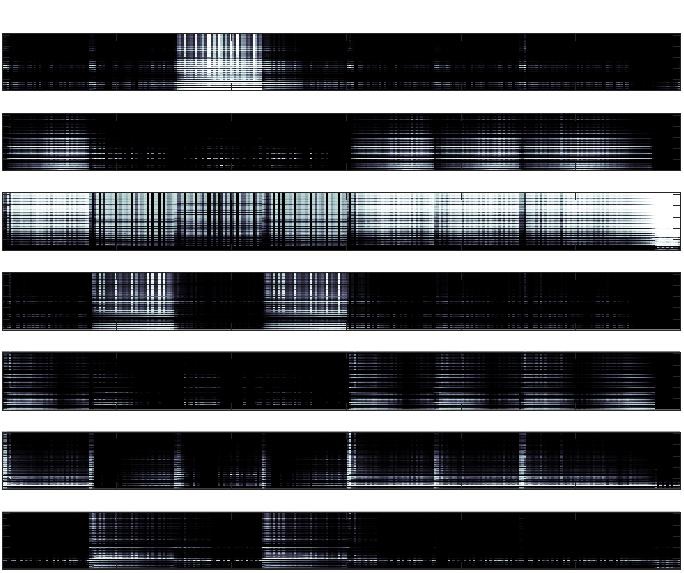}
        \caption{baseline KL-NMF}
        \label{fig:baseline_kl_masks2}
    \end{subfigure}
    ~ %add desired spacing between images, e. g. ~, \quad, \qquad, \hfill etc. 
      %(or a blank line to force the subfigure onto a new line)
    \begin{subfigure}[b]{0.32\textwidth}
        \includegraphics[width=\textwidth]{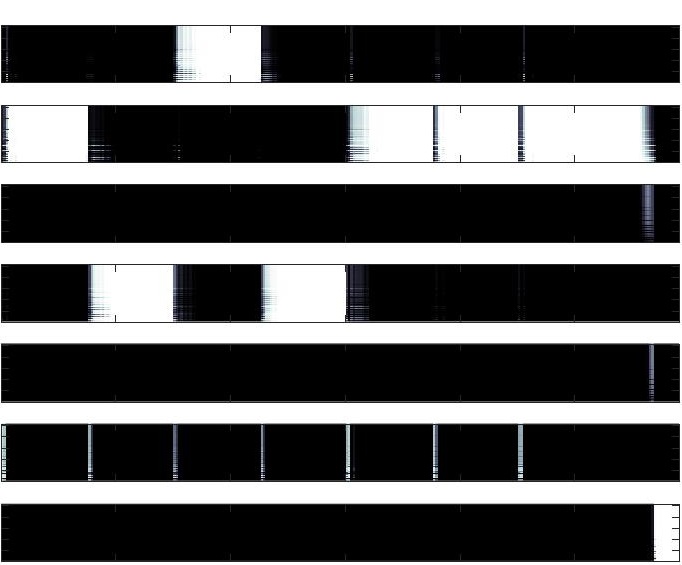}
        \caption{min-vol KL-NMF}
        \label{fig:minvol_kl_masks2}
    \end{subfigure}
    \begin{subfigure}[b]{0.32\textwidth}
        \includegraphics[width=\textwidth]{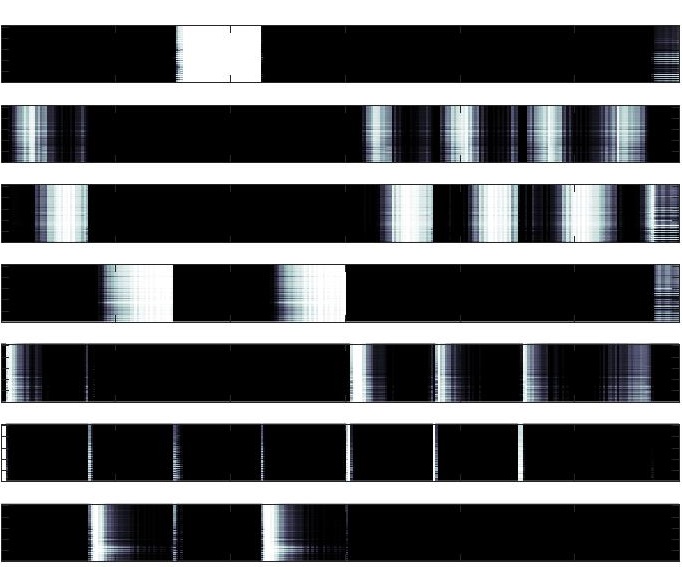}
        \caption{sparse KL-NMF}
        \label{fig:sparse_kl_masks2}
    \end{subfigure}
    \caption{Masking coefficients obtained with baseline KL-NMF (top), min-vol KL-NMF (middle) and sparse KL-NMF (bottom) applied to ``Mary had a little lamb" amplitude spectrogram with $K$=7. \label{fig:compa_mary4}} 
\end{figure} 
We observe that min-vol KL-NMF is able to extract the three notes correctly and set automatically to zero three source estimates (more precisely, three rows of $H$ are set to zero, while the corresponding columns of $W$ have entries equal to one another as $||W(:,k)||_1=1$ for all $k$) while baseline KL-NMF and sparse KL-NMF split the notes in all the sources.  
One can observe that a fourth note is identified in all simulations (see isolated peaks on Figure~\ref{fig:compa_mary4}-(b), second row of $H$ from the top) and corresponds to the noise within the piano just before triggering a specific note (in particular, the hammer noise). This observation is confirmed by the fact that the amplitude is proportional to the natural strength of the fingers playing the notes. 
In this scenario, with $K$ is overestimated, min-vol KL-NMF outperforms baseline KL-NMF and sparse KL-NMF.

\paragraph{Prelude of Bach}\label{paraPrelude} The second audio sample corresponds to the first 30 seconds of ``Prelude and Fugue No.1 in C major" from J.~S.~Bach played by Glenn Gould\footnote{\url{https://www.youtube.com/watch?v=ZlbK5r5mBH4}}. The audio sample is a sequence of 13 notes: $B_{3}$, $C_{4}$, $D_{4}$, $E_{4}$, $F^{\#}_{4}$, $G_{4}$, $A_{4}$, $C_{5}$, $D_{5}$, $E_{5}$, $F_{5}$, $G_{5}$, $A_{5}$. The recorded signal is downsampled to $f_{s}=11025$Hz yielding $T$=330750 samples. STFT of the input signal $x$ yields a temporal resolution of 46ms and a frequency resolution of 10.76Hz, so that the amplitude spectrogram $V$ has $N$=647 frames and $F$=513 frequency bins. 
The musical score is presented on Figure~\ref{fig:prelude_representations}. All NMF algorithms were run for 300 iterations which allowed them to converge.  
Figure~\ref{fig:w_h} presents the results obtained for $W$ and $H$ with a factorization rank $K=16$, hence overestimated. 
%The results were obtained with a random initialization for matrices $W$ and $H$ and a maximum number of iterations fixed to 300. The best results among 5 runs are considered in this section. 
We observe that min-vol KL-NMF automatically sets three components to zero (with * symbol on Figure~\ref{fig:w_h}) while 13 source estimates are determined. The analysis of the fundamentals (maximum peak frequency) of the 13 source estimates correspond to the theoretical fundamentals of the 13 notes mentioned earlier. Note that using baseline KL-NMF or sparse KL-NMF led to same conclusions as for the first audio sample; these two algorithms generate as many source estimates as imposed by the rank of factorization while min-vol KL-NMF algorithm preserves the integrity of the 13 sources. Additionally, the activations are coherent with the sequences of the notes.  Figure~\ref{fig:bach_vali_seq} shows (on a limited time interval) that the estimate sequence follows the sequence defined in the score. Note that a threshold and permutations on rows of $H$ was used to improve visibility.

\paragraph{Bass and drums} The third audio signal is a synthetic mix of a bass and drums\footnote{\url{http://isse.sourceforge.net/demos.html}}. 
The audio signal is downsampled to $f_{s}$=$16000$Hz yielding $T$=104821 samples. 
STFT of the input signal $x$ yields a temporal resolution of 32ms and a frequency resolution of 15.62Hz, so that the amplitude spectrogram $V$ has $N$=206 frames and $F$=513 frequency bins. 
 For this synthetic mix, we have access to the true sources under the form of two audio files. Therefore, we can estimate the quality of the separation with standard metrics, namely the signal to distortion ratios (SDR), the source to interference ratios (SIR) and the sources to artifacts ratios (SAR)~\cite{vincent}. They have been computed with the toolbox BSS Eval\footnote{\url{http://bass-db.gforge.inria.fr/bss_eval/}}. 
 The metrics are expressed in dB and the higher they are the better is the separation. Algorithms min-vol KL-NMF, baseline KL-NMF and sparse KL-NMF have been considered for this comparative study. A factorization rank equal to two is used. It is clear that the rank-one approximation is too simplistic for these sources but the goal is to compare the algorithms and show that min-vol KL-NMF is able to find a better solution even in this simplified context. 
 % The results were obtained with a random initialization for matrices $W$ and $H$ and a maximum number of iterations fixed to 400. 
All NMF algorithms were run for 400 iterations which allowed them to converge. 
Table~\ref{table:metric_compa} shows the results. 
\begin{center}
\begin{table*}[h!]
\begin{center}
\caption{SDR, SIR and SAR metrics comparison for results obtained with baseline KL-NMF and min-vol KL-NMF on a synthetic mix of bass and drums}
\label{table:metric_compa}
\begin{tabular}{|c|c|c|c|c|c|c|}
  \hline
  Algorithms       & \multicolumn{3}{|c|}{Source 1: bass} & \multicolumn{3}{|c|}{Source 2: drums} \\
                   & SDR(dB) & SIR(dB) & SAR(dB)          & SDR(dB) & SIR(dB) & SAR(dB)           \\
  \hline
  min-vol KL-NMF   & \textbf{-1.14} & \textbf{0.12} & \textbf{7.78}  & \textbf{9.60} & \textbf{19.8}  & 10.09 \\
 
  \hline
  baseline KL-NMF  & -4.26 & -1.39 & 2.64                 & 7.97 & 9.00  & \textbf{15.25} \\

  \hline
  sparse KL-NMF    & -4.69 & -1.73 & 2.33                  & 7.89 & 8.96  & 14.98 \\  \hline
\end{tabular} 
\end{center}
\end{table*}
\end{center}
Except for SAR metric for the second source (drums), min-vol KL-NMF outperforms baseline KL-NMF and sparse KL-NMF.  
\begin{figure}
 \includegraphics[width=0.5\textwidth]{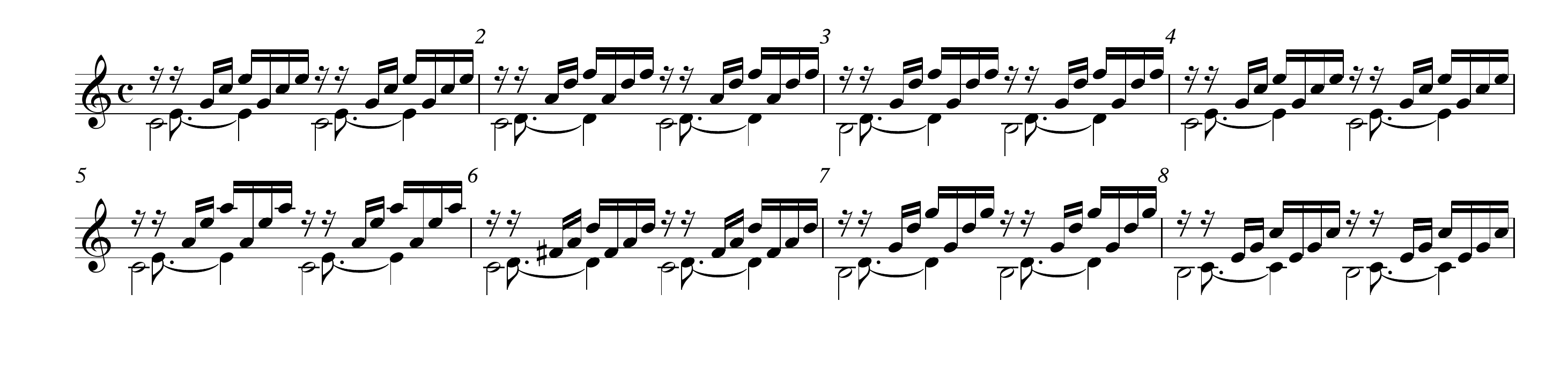} 
\caption{
Musical score of the sample ``Prelude and Fugue No.1 in C major". 
}
\label{fig:prelude_representations}
\end{figure} 
\begin{figure}
\centering
\begin{subfigure}[b]{0.4\textwidth}
        \includegraphics[width=\textwidth]{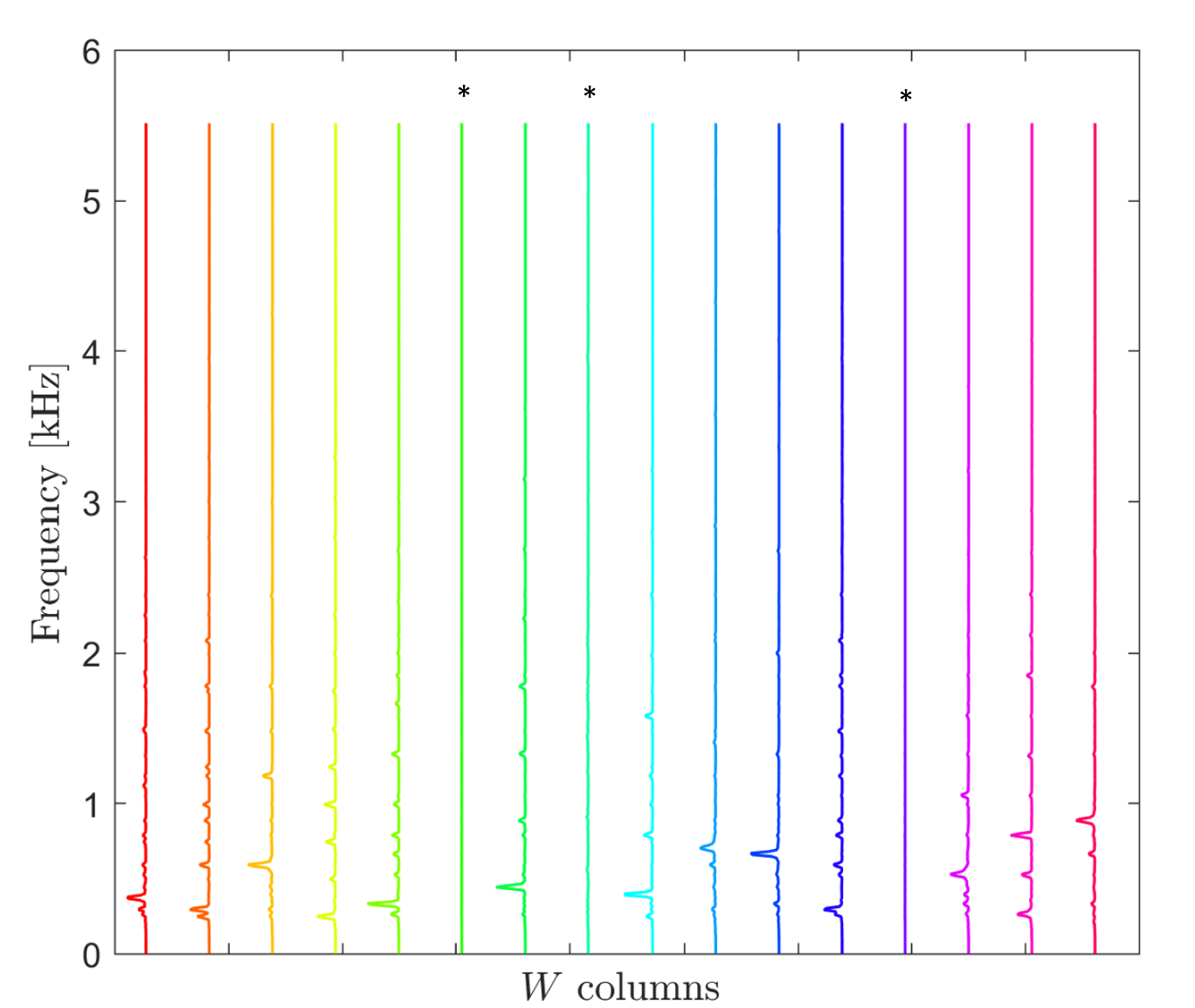}
        \caption{Columns of $W$}
    \end{subfigure}
\begin{subfigure}[b]{0.4\textwidth}
        \includegraphics[width=\textwidth]{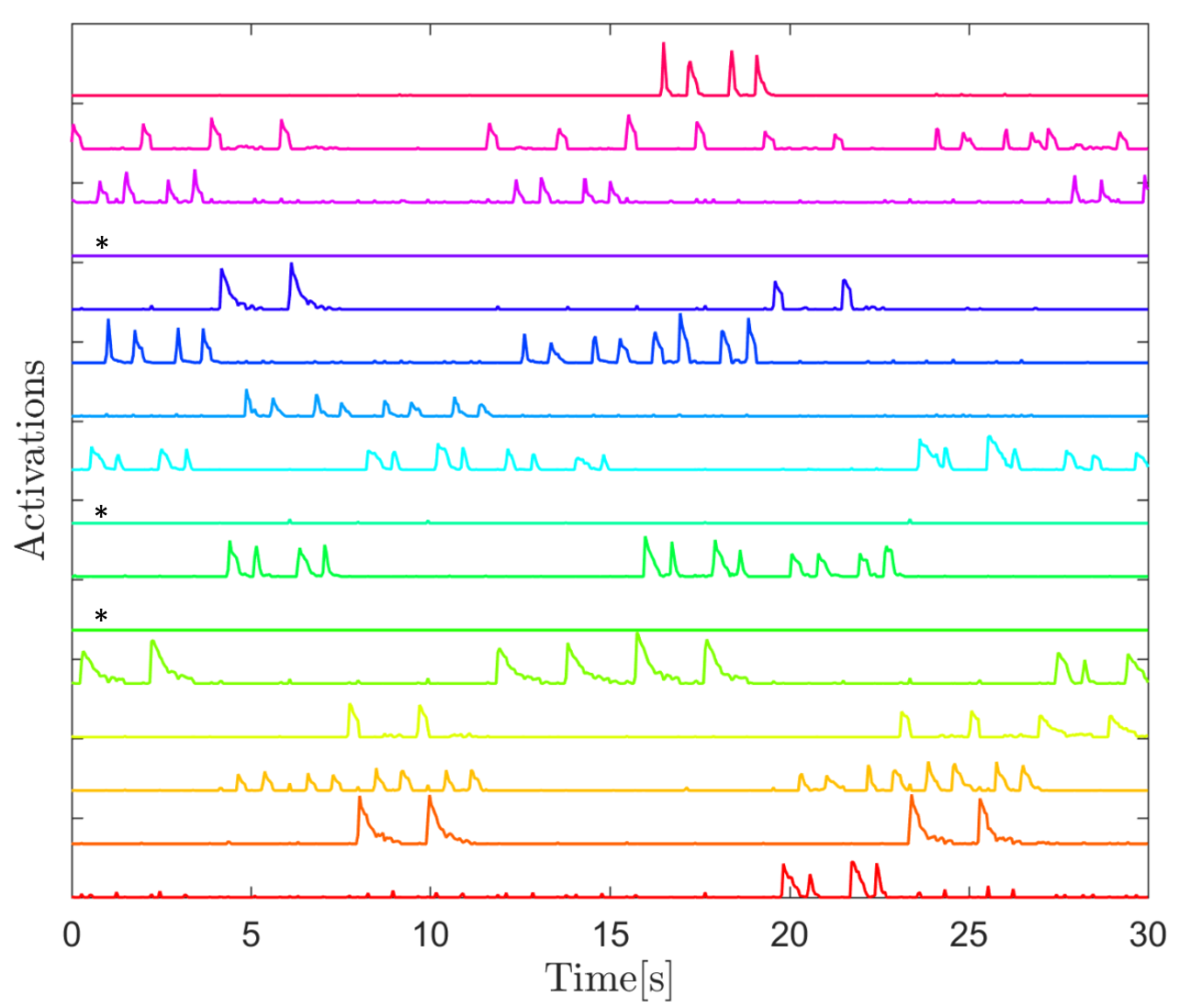}
        \caption{Rows of $H$}
    \end{subfigure}
\caption{
Factors matrices $W$ and $H$ obtained with min-vol KL-NMF with factorization rank $K$=16 on the sample ``Prelude and Fugue No.1 in C major". 
}
\label{fig:w_h}
\end{figure}

\begin{figure}
\centering
\begin{subfigure}[b]{0.5\textwidth}
        \includegraphics[width=\textwidth]{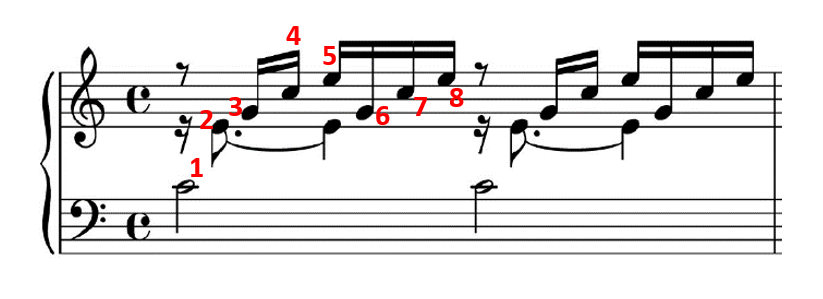}
    \end{subfigure}
\begin{subfigure}[b]{0.45\textwidth}
        \includegraphics[width=\textwidth]{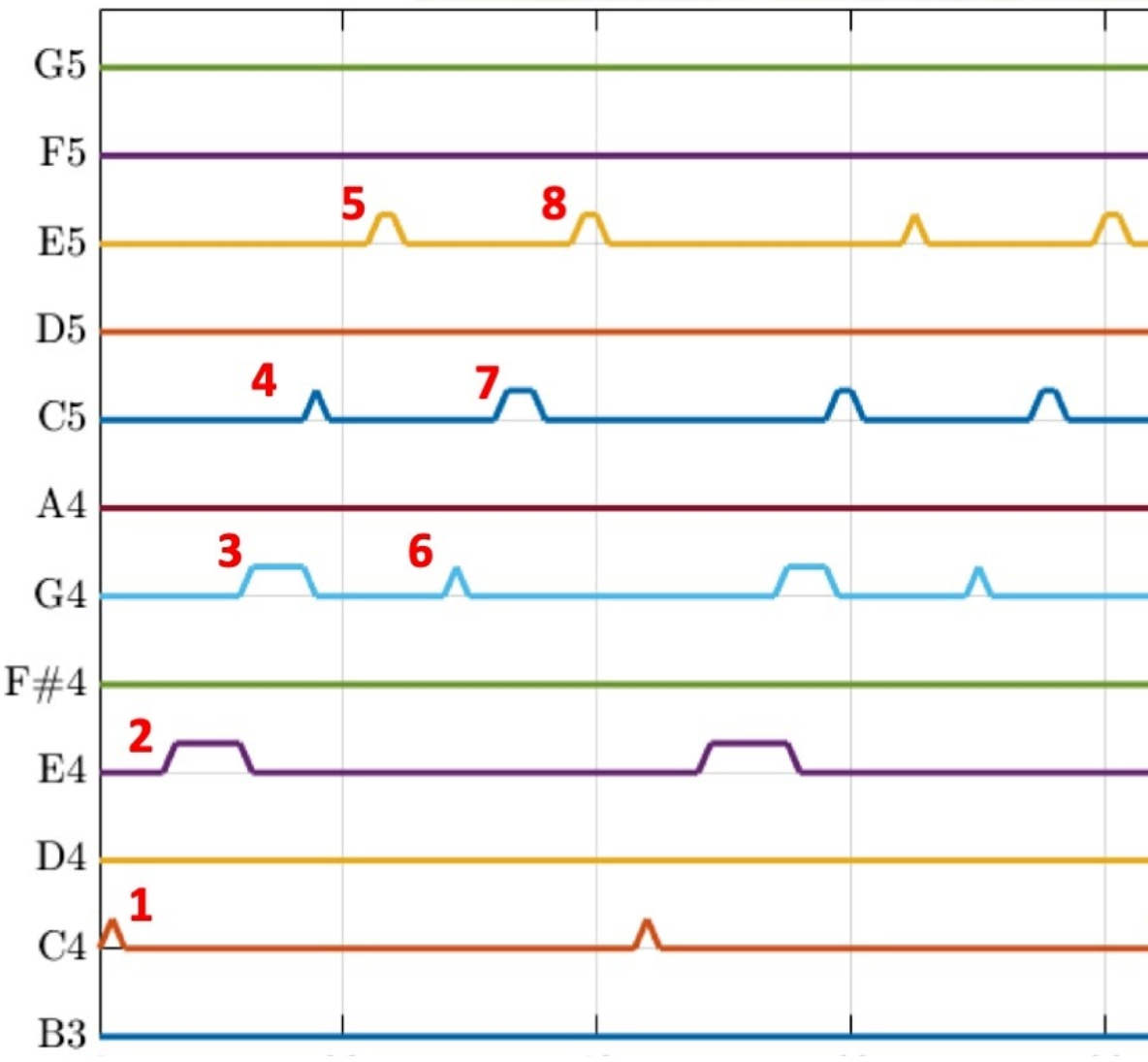}
    \end{subfigure}
\caption{
Validation of the estimate sequence obtained with min-vol KL-NMF with factorization rank $K$=16 on the sample ``Prelude and Fugue No.1 in C major".
}
\label{fig:bach_vali_seq}
\end{figure}

\modif
\paragraph{Runtime performance}
 Let us compare the runtime of baseline KL-NMF, min-vol KL-NMF (Algorithm~\ref{KLminvol}) and sparse KL-NMF~\cite{Leroux}. The algorithms are compares on the three examples presented in paragraphs \ref{paraMary} and \ref{paraPrelude}: 
\begin{itemize}
    \item Setup $\sharp$1: sample “Mary  had  a  little  lamb” with \mbox{$K=3$}, 200 iterations. 
    \item Setup $\sharp$2: sample “Mary  had  a  little  lamb” with \mbox{$K=7$}, 200 iterations. 
    \item Setup $\sharp$3: “Prelude and Fugue No.1 in C major” with $K=16$, 300 iterations.
\end{itemize}
For each test setup, the algorithms are run for the same 20 random initializations of $W$ and $H$. 
Table~\ref{table:runtimeperf} reports the average and standard deviation of the runtime (in seconds) over these 20 runs. %Note that the runtimes include the computation of the cost function at each iteration.  shows the results. 
We observe that the runtime of min-vol KL-NMF (Algorithm~\ref{KLminvol}) is slower but not significantly so, as expected. In particular, on the larger setup~$\sharp$3, it is less than three times slower than the standard MU. 
\begin{center}
\begin{table}[ht!]
\begin{center}
\modif
\caption{\modif Runtime  performance in seconds of baseline KL-NMF, min-vol KL-NMF (Algorithm~\ref{KLminvol}) and sparse KL-NMF~\cite{Leroux}. The table reports the average and standard deviation over 20 random initializations for three experimental setups described in the text. \endmodif } 
\label{table:runtimeperf}
\begin{tabular}{|c|c|c|c|}
  \hline
  Algorithms       & \multicolumn{3}{|c|}{runtime in seconds} \\
                   & setup $\sharp$1 & setup $\sharp$2 & setup $\sharp$3  \\
  \hline
  baseline KL-NMF   & 0.44$\pm$0.03 & 0.43$\pm$0.01 & 3.81$\pm$0.19  \\
 
  \hline
  min-vol KL-NMF  & 3.79$\pm$0.13 & 2.39$\pm$0.30 & 10.19$\pm$1.28                 \\

  \hline
  sparse KL-NMF    & 0.20$\pm$0.02 & 0.20$\pm$0.01 & 2.21$\pm$ 0.26                  \\  \hline
\end{tabular} 
\endmodif 
\end{center}
\end{table}
\end{center}

\endmodif

\section{Conclusion and Perspectives} \label{conclu}

In this paper, we have presented a new NMF model of audio source separation based on the minimization of a cost function that includes a $\beta$-divergence (data fitting term)  and a penalty term that promotes solutions $W$ with minimum volume. We have proved the identifiability of the model in the exact case, under the sufficiently scattered condition for the activation matrix $H$. 
We have provided multiplicative updates to tackle this problem and have illustrated the behaviour of the method on real-world audio signals. We highlighted the capacity of the model to deal with the case where $K$ is overestimated by setting automatically to zero some components and give good results for the source estimates. 

Further work includes tackling the following questions:  
\begin{itemize}
	\item Under which conditions can we prove the identifiability of min-vol $\beta$-NMF in the presence of noise, and the rank-deficient case?
	
	%\item Can we prove robustness to noise of such techniques ?
	
	\item Can we prove that min-vol $\beta$-NMF performs model order selection automatically? Under which conditions? 
	We have observed this behaviour on many examples, but the proof remains elusive. 
	
	\item Can we design more efficient algorithms? 
\end{itemize}
Further work also includes the use of our new model min-vol $\beta$-NMF for other applications and the design of more efficient algorithms (for example, that avoid using a line-search procedure) with stronger convergence guarantees (beyond the monotonicity of the objective function).

 \paragraph*{Acknowledgments} We thank Kejun Huang and Xiao Fu for helpful discussion on Theorem~\ref{mainth}, and giving us the insight to adapt their proof from~\cite{nmfidentifiable} to our model~\eqref{eq:4}.  
 \modif We also thank the reviewers for their insightful comments that helped us improve the paper. 
\endmodif

%\newpage
\appendix

\subsection{Sufficiently scattered condition and identifiability} \label{app:iden}  

%Let us formally define the identifiability of a minimum volume criterion~\cite{fu3}:
%\begin{definition}\label{def3}
%(Identifiability) Consider a data matrix generated from the model $V=W^{\#}H^{\#}$ where $W^{\#}$ and $H^{\#}$ are the true latent factors. Let be $\left(W^{*},H^{*} \right)$ any optimal solutions to problem \eqref{eq:8b}.
%If every optimal solutions $\left(W^{*},H^{*} \right)$ satisfies $W^{*}=W^{\#}\Pi$ and $H^{*}=\Pi^{T}H^{\#}$, where $\Pi$ denotes a permutation matrix then we say that the minimum-volume criterion identifies the true matrix factors, or minimum-volume identifiability holds.
%\end{definition}
 
 Before giving the definition of the sufficiently scattered condition from~\cite{huang},  
 let us first recall an important property of the duals of nested cones. 
\begin{lemma}\label{lemma2}
Let $\mathcal{C}_{1}$ and $\mathcal{C}_{2}$ be convex cones such that $\mathcal{C}_{1} \subseteq \mathcal{C}_{2}$. 
Then $\mathcal{C}^{*}_{2} \subseteq \mathcal{C}^{*}_{1}$ where $\mathcal{C}^{*}_{2}$ and $\mathcal{C}^{*}_{1}$ are respectively the dual cones of $\mathcal{C}_{1}$ and $\mathcal{C}_{2}$. The dual of a cone $\mathcal{C}$  is defined as 
$\mathcal{C}^{*}=\left\lbrace y| x^T y \geq 0 \text{ for all } x \in \mathcal{C}  \right\rbrace$.
\end{lemma}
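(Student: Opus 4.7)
The plan is to prove the inclusion $\mathcal{C}_2^* \subseteq \mathcal{C}_1^*$ directly from the stated definition of dual cone by a one-step element chase. I would fix an arbitrary $y \in \mathcal{C}_2^*$ and show that $y$ also belongs to $\mathcal{C}_1^*$. By definition of $\mathcal{C}_2^*$, the inequality $x^T y \geq 0$ holds for every $x \in \mathcal{C}_2$. Since by hypothesis $\mathcal{C}_1 \subseteq \mathcal{C}_2$, any $x \in \mathcal{C}_1$ is in particular an element of $\mathcal{C}_2$, and therefore $x^T y \geq 0$ holds for every $x \in \mathcal{C}_1$ as well. Applying the definition of dual cone in the other direction, this says exactly that $y \in \mathcal{C}_1^*$. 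Since $y$ was arbitrary in $\mathcal{C}_2^*$, the inclusion $\mathcal{C}_2^* \subseteq \mathcal{C}_1^*$ follows.

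There is essentially no obstacle to overcome: the result is the general order-reversing property of polar/dual operators — enlarging the set of constraints $x^T y \geq 0$ can only shrink the feasible set of $y$'s — and the proof reduces to unpacking the definition on each side of the desired inclusion. Note that the argument uses only the set-theoretic inclusion $\mathcal{C}_1 \subseteq \mathcal{C}_2$ and the definition of the dual; neither convexity nor the conic structure is actually invoked, so the statement in fact holds for arbitrary subsets of $\mathbb{R}^F$. The lemma is presented for convex cones simply because that is the context in which it will be applied within the identifiability proof of Theorem~\ref{mainth}.
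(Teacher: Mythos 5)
Your proof is correct. The paper states this lemma as a recalled standard fact and gives no proof of it at all, so there is nothing to diverge from: your one-step element chase (take $y \in \mathcal{C}_2^*$, note every $x \in \mathcal{C}_1$ lies in $\mathcal{C}_2$, conclude $x^Ty \ge 0$ for all $x \in \mathcal{C}_1$) is exactly the standard argument the authors implicitly rely on, and your observation that convexity and the conic structure are not needed is accurate.
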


\begin{definition}\label{def4}
(Sufficiently Scattered) A matrix $H \in \mathbb{R}_{+}^{K \times N}$ is sufficiently scattered if 
\begin{enumerate}
	\item $\mathcal{C} \subseteq \cone \left( H \right) $, and  
	\item $\cone \left( H \right)^{*} \cap \text{bd}\mathcal{C}^{*}=\left\lbrace\lambda e_{k}|\lambda \geq 0, k=1,...,K\right\rbrace$, 
\end{enumerate}
where $\mathcal{C} = \left\lbrace x|x^{T}e \geq \sqrt{K-1} \left\| x \right\|_{2}   \right\rbrace$ is a second order cone, $\mathcal{C}^{*} = \left\lbrace x|x^{T}e \geq \left\| x \right\|_{2}   \right\rbrace$, 
$\cone\left( H \right)=\left\lbrace x|x=H\theta,  \theta \geq 0   \right\rbrace$ is the conic hull of the columns of $H$, and 
%and $\cone\left( H \right)^{*} = \left\lbrace y| x^T y \geq 0, \text{ for all } x \in \cone\left( H \right)  \right\rbrace$ is the dual cone of the columns of $H$ and 
$\text{bd}$ denotes the boundary of a set.
\end{definition} 

%Note that if a matrix $H$ is sufficiently scattered, then %$\cone\left( H \right)^{*} \subseteq \mathcal{C}^{*}$ by Lemma~\ref{lemma2}.  
We can now prove Theorem~\ref{mainth}. 

\begin{proof}[Proof of Theorem~\ref{mainth}] 
Recall that $W^{\#}$ and $H^{\#}$ are the true latent factors that generated $V$, with $\text{rank}(V) = K$ and $H^{\#}$ is sufficiently scattered. 
Let us consider $\hat{W}$ and $\hat{H}$ 
a feasible solution of~\eqref{eq:8b}. 
Since $\text{rank}(V)=K$ and $V=\hat{W}\hat{H}$, 
we must have $\rank(\hat{W})=\text{rank}(\hat{H})=K$. 
Hence there exists an invertible matrix $A \in \mathbb{R}^{K \times K}$ such that $\hat{W}=W^{\#}A^{-1}$ and $\hat{H}=A H^{\#}$.  
Since $\hat{W}$ is a feasible solution of problem~\eqref{eq:8b}, we have  
\begin{equation*} 
\begin{aligned}
&  e^{T}\hat{W}=e^{T}W^{\#}A^{-1}=e^{T}A^{-1}=e^{T}, 
\end{aligned}
\end{equation*} 
where we assumed $e^{T}W^{\#}=e^{T}$ without loss of generality since $W^{\#} \geq 0$ and $\text{rank}(W^{\#})=K$.  
Note that $e^{T}A^{-1}=e^{T}$ is equivalent to $e^{T}A=e^{T}$. This means that matrix $A$ is column stochastic. 
Therefore we have that $e^{T}A e=K$.
Since $\hat{H}$ is a feasible solution, we also have 
$\hat{H}=A H^{\#} \geq 0$. 
Let us denote by $a_{j}$ the $j$th row of A, and by $a^{T}_{k}$ the $k$th column of $A^{T}$.  
By the definition of the a dual cone, $A H^{\#} \geq 0$ means that the rows $a_{j} \in \cone(H^{\#})^{*}$ for $j=1,...,K$. 
Since $H^{\#}$ is sufficiently scattered, $\cone\left( H \right)^{*} \subseteq \mathcal{C}^{*}$ (by Lemma~\ref{lemma2}) hence $a_{j} \in \mathcal{C}^{*}$. 
Therefore we have $\left\| a_{j} \right\|_{2} \leq a_{j} e$ by definition of $\mathcal{C}$. 
This leads to the following: 
  $|\text{det}(A)|=|\text{det}(A^{T})| 
  \leq \prod_{k = 1}^{K} \left\| a^{T}_{k} \right\|_{2} 
                                      = \prod_{j = 1}^{K} \left\| a_j \right\|_{2}  
                                      \leq \prod_{j = 1}^{K} a_j e            
                                      \leq \left( \frac{\sum_{j}^{K} a_j e}{K} \right)^{K} 
                                      = \left( \frac{e^{T}A e}{K}  \right)^{K} 
                                      =1$.  
The first inequality is the Hadamard inequality, the second inequality is due to $a_{j} \in \mathcal{C}^{*}$, the third inequality is the arithmetic-geometric mean inequality. Now we can conclude exactly as is done in \cite[Theorem 1]{nmfidentifiable} by showing that matrix $A$ can only be a permutation matrix for an optimal solution ($\hat{W}$,$\hat{H}$) 
of~\eqref{eq:8b}, and therefore identifiability for model~\eqref{eq:8b} holds.  
\end{proof}

\subsection{Proof of Lemma~\ref{lem1quad}} \label{app:lem3}

Separability of $\bar{l}(w|\tilde{w})$ holds since $\modif \Phi\left( \tilde{w} \right) \endmodif $ is diagonal.
The condition $\bar{l}(\tilde{w}|\tilde{w}) = l(\tilde{w})$ from Definition~\ref{def2} can be checked easily. 
It remains to prove that $\bar{l}(w|\tilde{w})\geq l(w)$ for all $w$. 
Let us first rewrite the quadratic function $l(w)$ using its Taylor expansion at $w = \tilde{w}$: 
$l(w) 
= l(\tilde{w})+\left( w-\tilde{w}\right)^{T}  \nabla l \left( \tilde{w}\right) + 
\frac{1}{2} \left( w-\tilde{w}\right)^{T} \nabla^{2}l\left( \tilde{w} \right)  \left( w-\tilde{w}\right) = l(\tilde{w})+\left( w-\tilde{w}\right)^{T}  \modif 2 \endmodif Y \tilde{w} + \frac{1}{2} 
\left( w-\tilde{w}\right)^{T} 2Y  \left( w-\tilde{w}\right)$.  
Proving that $\bar{l}(w|\tilde{w})\geq l(w)$ is equivalent to proving that 
$\frac{1}{2} \left( w-\tilde{w}\right)^{T} \left[ \modif \Phi\left( \tilde{w} \right) \endmodif  - 2Y \right] \left( w-\tilde{w}\right) \geq 0$, 
which boils down to proving that the matrix $\left[ \modif \Phi\left( \tilde{w} \right) \endmodif  - 2Y \right]$ is positive semi-definite. We have 
$\modif \Phi_{ij}(\tilde{w}) \endmodif= 2 \delta_{ij} \frac{(Y^{+}\tilde{w})_{i}+(Y^{-}\tilde{w})_{i}}{\tilde{w}_{i}}$, 
where $\delta_{ij}$ is the Kronecker symbol. 
Let us consider the following matrix: 
$M_{ij}(\tilde{w})= \tilde{w}_{i} \left[ \modif \Phi\left( \tilde{w} \right) \endmodif  - 2Y \right]_{ij} \tilde{w}_{j}$, 
which is a rescaling of $\left[ \modif \Phi\left( \tilde{w} \right) \endmodif  - 2Y \right]$. It remains to show that $M$ is positive semi-definite\footnote{The remainder of the proof was suggested to us by one of the reviewers, it is more elegant and simpler than our original proof.}. 
Since $M$ is symmetric and its diagonal entries
are non-negative, it is sufficient to show that $M$ is diagonally dominant~\cite[Proposition~7.2.3]{horn1985matrix}, that is, 
\begin{equation*} 
\begin{aligned}
&  \left|M_{ii} \right| \geq \sum_{j \neq i}\left|M_{ij} \right| \; \text{ for all } i. 
\end{aligned}
\end{equation*} 
We have for all $i$ that 
\begin{align*}
M_{ii} & =2w_{i}\sum_{j}\left( Y_{ij}^{+}+Y_{ij}^{-}\right)w_{j}-2w_{i}Y_{ii}w_{i}, \text{ and } \\
 M_{ij} & =-2w_{i}Y_{ij}w_{j} \quad \text{for } j \neq i. 
\end{align*} 
Since $Y_{ij}^{+}+Y_{ij}^{-}=\left|Y_{ij} \right|$, we have 
\begin{equation*} 
\begin{aligned}
  M_{ii}-\sum_{j \neq i}\left|M_{ij} \right|&=2w_{i}\sum_{j}\left|Y_{ij} \right|w_{j}-2w_{i}Y_{ii}w_{i}\\
  & \qquad -2w_{i}\sum_{j \neq i}\left|Y_{ij}\right|w_{j}\\
  & = 2w_{i}\left|Y_{ii} \right|w_{i}-2w_{i}Y_{ii}w_{i} \geq 0, 
\end{aligned}
\end{equation*} 
implying that $M$ is diagonally dominant.

\subsection{Algorithm for min-vol IS-NMF} \label{app:is}

For $\beta=0$ (IS divergence), the derivative of the auxiliary function $\bar{F}(w|\tilde{w})$ with respect to a specific coefficient $w_{k}$ is given by: 
\begin{equation*}\label{eq:48}
\begin{aligned}
\nabla_{w_{k}} \bar{F}(w|\tilde{w})&=\sum_{n}\frac{h_{kn}}{\tilde{v}_{n}}-\sum_{n}h_{kn}\frac{\tilde{w}_{k}^{2} v_{n}}{w_{k}^{2}\tilde{v}_{n}^{2}}+2\lambda \left[ Y \tilde{w} \right]_{k} \\ 
& + 2 \lambda \left[ \diag \left( \frac{Y^{+}\tilde{w}+Y^{-}\tilde{w}}{\tilde{w}} \right) \right]_{k} w_{k}\\
&- 2 \lambda \left[ \diag \left( \frac{Y^{+}\tilde{w}+Y^{-}\tilde{w}}{\tilde{w}} \right) \right]_{k} \tilde{w}_{k}. 
\end{aligned}
\end{equation*}
Let  
\begin{equation}\label{eq:49}
\begin{aligned}
&\tilde{a}= 2 \lambda \left[ \diag \left( \frac{Y^{+}\tilde{w}+Y^{-}\tilde{w}}{\tilde{w}} \right) \right]_{k}, \\  
%= 2 \frac{\lambda}{\tilde{w}_{k}} \left[ \diag \left( Y^{+}\tilde{w}+Y^{-} \tilde{w} \right)\right]_{k}, \\ 
&\tilde{b} = \sum_{n}\frac{h_{kn}}{\tilde{v}_{n}} - 4\lambda \left[ Y^{-} \tilde{w} \right]_{k},  \\
&\tilde{d}= -\sum_{n}h_{kn}\frac{\tilde{w}_{k}^{2} v_{n}}{\tilde{v}_{n}^{2}}. \\
\end{aligned}
\end{equation} 
Setting the derivative to zero requires to compute the roots of the following degree-three polynomial $\tilde{a} w_{k}^{3} + {\tilde{b}} w_{k}^{2} + {\tilde{d}}$.  
We used the procedure developed in \cite{root3} which is based on the explicit calculation of the intermediary root of a canonical form of cubic. This procedure is able to provide highly accurate numerical results even for badly conditioned polynomials.  
The algorithm for min-vol IS-NMF follows the same steps as for min-vol KL-NMF: only the two steps corresponding to the updates of $W$ and $H$ have to be modified. 
For the update of $H$ (step 4), use the standard MU. 
For the update of $W$ (step 9), use \\ 
\noindent \textbf{for} {$f \gets 1$ to $F$}   \\ 
	 {${}$}  \hspace{0.2cm}   \textbf{for} {$k \gets 1$ to $K$}\\ 
	 {${}$}  \hspace{0.4cm} $\text{Compute } \tilde{a} \text{, } \tilde{b} \text{ and } \tilde{d} \text{ according to equations \eqref{eq:49}}$\\  
	 {${}$}  \hspace{0.4cm} 
	Compute the roots of $\tilde{a} w_{k}^{3} + {\tilde{b}} w_{k}^{2} + {\tilde{d}}$ \\  
	 {${}$}  \hspace{0.4cm} Pick $y$ among these roots and zero that minimizes  \\  {${}$}  \hspace{0.4cm} the objective  \\  
	 {${}$}  \hspace{0.4cm} $W^{+}_{f,k} \leftarrow \text{max} \left(10^{-16},y\right)$\\  
	 {${}$}  \hspace{0.2cm}  \textbf{end for}     \\             
    \textbf{end for}

%%%%%%%%%%%%%%%%%%%%%%%%%%%%%%%%%%%%%%%%%%%%%%%%%%%%%%%%%%%%%%%%%%

% Can use something like this to put references on a page
% by themselves when using endfloat and the captionsoff option.
\ifCLASSOPTIONcaptionsoff
  \newpage
\fi

\bibliographystyle{IEEEtran}
\bibliography{Article2018}

\end{document}